\newtheorem{theorem}{Theorem}
\newtheorem{corollary}[theorem]{Corollary}
\newtheorem{definition}[theorem]{Definition}
\newtheorem{proposition}[theorem]{Proposition}
\newenvironment{proof}[1][Proof]{\noindent\textbf{#1.} }{\ \rule{0.5em}{0.5em}}
\title{Sampling spatial \ Structures in geostatistical framework}
\author{\href{https://orcid.org/0000-0000-0000-0000}{\includegraphics[scale=0.06]{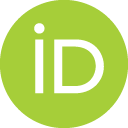}\hspace{1mm}Fabrice Ouoba} \\
	Université de Fada N'Gourma\\
	BP: 54 Fada N'Gourma, Burkina Faso\\
	\texttt{didifab@yahoo.fr} \\
	\And	
	\href{https://orcid.org/0000-0000-0000-0000}{\includegraphics[scale=0.06]{orcid.png}\hspace{1mm}Diakarya Barro} \\
	Université Thomas Sankara \\
	12 BP: 417 Ouagadougou, Burkina Faso\\
	\texttt{dbarro2@gmail.com} \\
	\And
    \href{https://orcid.org/0000-0000-0000-0000}{\includegraphics[scale=0.06]{orcid.png}\hspace{1mm}Hay Yoba Talkibing} \\
	Université Joseph KI ZERBO\\
	03 BP: 7021 Ouagadougou, Burkina Faso\\
	\texttt{talkibingfils@yahoo.fr} 
}
\begin{document}
\maketitle

\begin{abstract}
Extreme values geostatistics make it possible to model the asymptotic behaviors of random phenomena which depends on space or time
parameters. In this paper, we propose new models of the extremal coefficient within a spatial stationary fields underlied by multivariate
copulas. Some models of extensions of the extremogram and the cross-extremogram are constructed in a spatial framework. Moreover, both
these two geostatistcal tools are modeled using the extremal variogram which characterizes the asymptotic stochastic behavior of the phenomena.
\end{abstract}

\keywords{Extremal index, extremogram, variogram, copulas, stationary process, extreme values distributions}
\textbf{2010 MSC: }60G70, 62H11, 60G10.

\section{Introduction}
Geostatistics provide many tools for statistical analysis of spatial or
spatio temporal datasets. This branch of statistics was developed originally
in the years $1930$ by a pionering work of George Matheron \cite{ato1}
to predict the probability distributions of more grades for mining operations.
Since, it became a subdomain of statistics based on the notion of random
fields including petroleum geology, hydrogeology, geochemistry,
geometallurgy, geography, forestry, environmental control, landscape
ecology, soil science and agriculture.

In spatial statistical analysis, the variograms and the covariance functions
are technical tools used in describing how the spatial continuity
changes with a given separating distance between two pair of stations. So, the classical variogram provides a framework for
modelling and predicting the variability of a given the stochastic spatial
process.

The family of copulas provides a natural way to construct multivariate
distribtutions whose marginals are uniform and not necessarily exchangeable.
Let $X=\left( X_{1},...,X_{n}\right) $ be a random vector with multivariate
continuous distribution function (c.d.f.) H and c.d.f marginal $%
H_{1},...,H_{n}.$ The copula of X (of the c.d.f. H respectively) is the multivariate c.d.f. C of the random vector $U=\left[ H_{1}(x_{1}),\dots,H_{n}(x_{n})\right] $. Due to the continuity of $\left\{ H_{i},1\leq i\leq
n\right\} ,$ each component of U is standard uniformly distributed, i.e., $U_{i}\sim U(0,1)$ for $i=1,\dots,n.$

Particularly, every n-copula must satisfy the n-increasing property \cite{frahm}. That means that, for any rectangle $B=\left[ a,b\right] ^{n}\subseteq \mathbb{R}^{n},$ the B-volume $C_{B}$ of C is positive, that is, 
\begin{equation}
C_{B}=\int_{B}dC\left( u\right) =\sum_{i_{1}=1}^{2}\dots\sum_{i_{n}=1}^{2}\left( -1\right)^{i_{1}+\dots +i_{n}}C\left( u_{1i_{1}},\dots,u_{1i_{n}}\right) \geq 0.
\end{equation}

In multivariate copulas analysis following canonical
parameterization of H (see \cite{ot1}) allows the\ use of copulas in
stochastic analysis under the so called Sklar theorem $\left[ 9\right] $\ or
Nelsen $\left[ 12\right] $\ 
\begin{equation}
\begin{tabular}{l}
$C(u_{1},...,u_{n})=H[H_{1}^{-1}(u_{1}),...,H_{n}^{-1}(u_{n})]$.
\end{tabular}
\end{equation}

H$^{-1}$ being the generalized inverse such as $H^{-1}\left(
x\right) =\inf \left\{ t\in \left[ 0,1\right] ,F\left( t\right) \leq
x\right\} $\textit{.\medskip }

While modeling the main geostatistical tools Ouoba et al. \cite{pater}
\ have provided the copula-based variogram, correlogram and madogam and they pointed out that these tools do not take into account the
extreme data observed in the different observation sites. However, the
copula function makes it possible to model the extreme data and make it
possible to detect any nonlinear link between different observation sites.
It is therefore necessary to express the variogram and the covariogram
according to the copula in order to be able to model the spatial structuring
even if our giving includes extremes and to be able to detect the presence
of some nonlinear dependence.\newline
The variogram 
\begin{eqnarray*}
\vartheta (s_{i},s_{j})=Var(Z(s_{i}))+Var(Z(s_{j}))-2\hat{c}(s_{i},s_{j}),
\end{eqnarray*}%
the covariogram $\hat{c}(s_{i},s_{j})$ and the copula function are linked by
the relation: 
\begin{eqnarray*}
\vartheta (s_{i},s_{j})=\sigma _{Z}^{2}(s_{i})+\sigma
_{Z}^{2}(s_{j})-2\int_{0}^{1}%
\int_{0}^{1}F_{Z}^{-1}(u)F_{Z}^{-1}(v)c(u,v)dudv-2m_{i}m_{j}
\end{eqnarray*}%
\begin{eqnarray*}
\hat{c}(s_{i},s_{j})=\int_{0}^{1}%
\int_{0}^{1}F_{Z}^{-1}(u)F_{Z}^{-1}(v)c(u,v)dudv-m_{i}m_{j}
\end{eqnarray*}%
where $m_{i}$ and $m_{j}$ the respective averages of $Z(s_{i})$ and $Z(s_{j})
$; $c(u,v)$ the copula density function attached to $Z(s_{i})$ and $Z(s_{j})$%
.\\
\\
The major contribution of this article is to provide tools to model the dependence of extremes in the mining context. In section 2 we develop the tools needed to achieve our goals. Our main results are given in section 3, where we propose new models of the extremal coefficient in a spatial stationary field using multivariate copulas and the extensions of the extremogram and the crossed extremogram in a spatial framework using the extremal variogram which characterizes the asymptotic stochastic behavior of phenomena.
\section{Back Ground}
In this section we collect the necessary definitions and usefull properties
on extremal dependence coefficient and tail dependence. So an overview of
spatial framework and copulas functions is given as well as some statements
of multivariate tail dependence coefficients.\bigskip

Multivariate extreme values (MEV) theory is often presented in the framework
of coordinatewise maxima, so the importance of distinction diminishes.
Towards a multivariate analogue of Fisher-Tippett we are looking for some
sort of multivariate limit distribution for conveniently normalized vectors
of multivariate maxima. For an arbitrary index of set T denoting generally a
space of time, a random vector $Y_{t}=\left\{ Y_{j}\left( t\right) ;1\leq
j\leq m,t\in T\right\} $\ in $%
\mathbb{R}
^{m}$\ is said to be max-stable if, for all $n\in 
\mathbb{N}
,$\ every $Y_{j}\left( t\right) =(Y_{j}^{(1)}\left( t\right) ;\ldots
;Y_{j}^{(n)}\left( t\right) )$\ is a n-dimensionnal max-stable vector, that
is, there exists suitable and time-varying non-random sequences $\left\{
a_{n}\left( t\right) >0\right\}~~ and ~~\left\{ b_{n}\left( t\right) \in \mathbb{R}^{d}\right\}~~ such~~ as 
~~$
\begin{eqnarray}
\frac{1}{a_{n}(t)}\left[ M_{n}(t) - b_{n}(t) \right] \stackrel{f.d.d}{\longrightarrow} X(t) ;t\in T,
\end{eqnarray}

where $\stackrel{f.d.d}{\longrightarrow}$ denotes the convergence for the
finite-dimensional distributions while $M_{n}\left( t\right) =\max_{1\leq i \leq n} \left( X_{i}\left(
t\right) \right);t\in T$ being the component-wise maxima of the time-variying vector $X\left( t\right) $.\medskip

Like in the non-spatial analysis, several canonical representations of
max-stable processes have been suggested in spatial extreme values context.
\ In the same vain, Barro et al. (see \cite{ato3} ) have propose the
following result \ allows us to characterize the general form of the
one-dimensional marginal of the max-stable ST process $\left\{ Y_{t}\right\} $ where $Y_{t}\left( x\right) =Y\left( x_{t}\right) $; $x_{t}\in \chi
_{_{D}}\times T\subset R^{3}$. $\left( Y_{j}^{s}\left( t\right) \right)
;j\geq 0;t\in T;s\in S$ such that for each fixed couple $(t,s)$, the
sequence is independent and identically distributed according to a joint
cumulative function $G_{t}^{s}$. Under the assumption that this function is
max-stable, every univariate margins $G_{t,i}^{s}$\ \ lies its own domain of
attraction and is expressed by on the space of interest $S_{\xi
_{i},t,s}^{+}=\left\{ z\in 
\mathbb{R}
;\sigma _{i,t,s}+\xi _{i,t,s}\left( y_{t,i}^{s}-\mu _{i,t,s}\right) >0;1\leq
i\leq n\right\} $\ by 
\begin{eqnarray}
G_{i}\left( y_{i}\left( s\right) \right) =\left\{ 
\begin{tabular}{l}
$\exp \left\{ -\left[ 1+\xi _{i}\left( s\right) \left( \frac{y_{i}\left(
s\right) -\mu _{i}\left( s\right) }{\sigma _{i}\left( s\right) }\right) %
\right] ^{\frac{-1}{\xi _{i}\left( s_{i}\right) }}\right\} 
\hspace*{0.5cm} $ \textit{if }$%
\xi _{i}\left( s\right) \not=0$ \\ 
\\
\\
$\exp \left\{ -\exp \left\{ -\left( \frac{y_{i}\left( s\right) -\mu
_{i}\left( s\right) }{\sigma _{i}\left( s\right) }\right) \right\} \right\} $
\hspace*{1.7cm} \textit{if }$\xi _{i}\left( s\right) =0$%
\end{tabular}%
\right. ;
\label{itp}
\end{eqnarray}%
and for all site s, the parameters $\left\{ \mu _{i,t,s}\in 
\mathbb{R}
\right\} $, $\left\{ \sigma _{i,t,s}>0\right\} $\ and $\left\{ \xi
_{i,t,s}\in 
\mathbb{R}
\right\} $\ are referred to as the location, the scale and the shape
parameters respectively. Particularly, the different values of $\xi
_{i}\left( s\right) \in \mathbb{R}$\ allows \ref{itp} to be a spatial EV model, that is, to belong either to Frechet family, the Weibull one or Gumbel one.\medskip

In multivariate case if the one-dimensional margins of F are unit-Fréchet distributed let M be a non-empty subset of $N=\left\{ 1,...,n\right\} $
and $c_{_{M}}$ the n-dimensional vector of which the jth coordinate is one
or zero according to $j\in M$ or $j\not\in M$. Then, the multivariate, $\theta _{M}$ is defined on the n-dimensional unit simplex, S$_{n}=\left\{
\left( t_{1}...,t_{n}\right) \in \left[ 0,1\right] ^{n},\sum_{i=1}^{n}t_{i}\leq 1\right\} ,$ such as,
\begin{equation}
\theta _{M}=V(c_{_{M}})=\int_{S_{n}}\max_{j\in M} \left( \frac{w_{j}}{\left\Vert w\right\Vert _{1}}\right) dH\left( w_{j}\right) ,
\end{equation}%
where H is a finite non-negative measure of probability and $\left\Vert
.\right\Vert _{1},$ the 1-norm, see \cite{ot1}, \cite{ot4}.
Particularly 
\begin{equation}
P\left[ F_{1}\left( x_{1}\right) \leq p,...,F_{n}\left( x_{n}\right) \leq p \right] =p^{\theta }~~ for ~~all ~~0<p<1.
\end{equation}

In spatial study, a natural way to measure dependence among spatial maxima
stems from considering the distribution of the largest value that might be
observed on domain of study. 

Our main results are summaried by the following sections.

\section{Spatial max-stability within geostatiscal framework}

The extremal coefficient is the natural dependence measures for extreme
value models which provides the magnitude of the asymptotic dependence of a
random field at two points of the domain.

\subsection{Context and definitions}
The context in this study, is a mining ressource models one.
\subsubsection{Problematic and variable}

We consider a mining geographic area for example. By considering a subdivision, a paving of the domain. We consider a reference grade $\beta_{O}$ for the given ore. The domain is thus divided into two subdomains depending on whether the content of the locality is lower or higher than this reference.

Working hypotheses:

\begin{description}
\item $\bullet $ The content depends on the locality and the depth h, that is to say $s=s\left( h\right) .$

\item[$\bullet $] The variable Y representing the content is linked to the locality $Y=Y\left( s\right) $ and therefore $Y=Y\left( s\left( h\right)
\right).$ So we build a stochastic process?
\end{description}

In this study, let $\left\{ Y_{s},s\in S\right\} $ a spatial stochastic
process defined on a geographical domain $S=\left\{ s_{1},\dots
,s_{n}\right\} $ \ where $\left\{ Y_{s},s\in S\right\} $ denotes the
contents of a metal in a mining site.\newline

Let n,k be naturel numbers such that $\left\{ n\geq 2;~~1\leq k\leq
n\right\} $ and let $N_{k}$ be a given subset of k elements of $N=\left\{
1,\dots ,n\right\} ,$ the set of the first n natural numbers.

\begin{figure}[htbp!!]
\begin{center}
\includegraphics[scale=0.7]{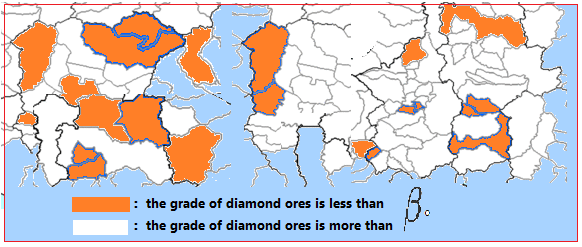} 
\caption{Domain of mining study.}
\end{center}
\end{figure}

Even in spatial stochastic context, three possible distributions can
describe the asymptotic behavior of conveniently normalized extremal
distributions at a given geographical locality s. These distributions are
instead described by a class of dependence models. Specially in a spatial
framework, let $D_{N}=\left\{ s_{1},...,s_{N}\right\} \subset \mathbb{R}^{2}$\textit{\ be the set of locations (geographical ereas, mines
localities, ...), sampled over a }$\left[ 0,\frac{1}{s}\right] \times \left[
0,\frac{1}{m}\right] $\textit{\ rectangle }$\left( m\in \mathbb{N}
\right) $\textit{, where the phenomenas are observed. Let Y a variable of
interest, observed at given site s and date t.}

\textit{So, the relation (2.2) provides, for all }$x_{t}=\left(
x_{t}^{\left( 1\right) };...;x_{t}^{\left( m\right) }\right) $\textit{\ in }$\mathbb{R}^{m}\times T$\textit{\ \ the relation}%
\begin{equation}
C_{s(t)}(u_{1};...;u_{m})=F_{t}^{\check{s}}\left[\left( F_{t}^{\check{s}_{1}}(u_{1})\right) ^{-1};...;\left( F_{t}^{\check{s}%
_{m}}(u_{m})\right) ^{-1}\right] .
\label{ifo}
\end{equation}

\textit{Note that, for all }$m\in \mathbb{N}
$ and for all geographical locality s, the spatio-temporal unit
simplex of $\mathbb{R}^{(m-1)}$ is given, under the notational by 
\begin{equation}
\Delta _{t,m}^{\check{s}}=\left\{ \lambda _{t}^{\check{s}}=\left( \lambda
_{1}^{\check{s}_{1}};...;\lambda _{t}^{\check{s}_{m}}\right) \in\mathbb{R}_{+}^{m};\left\Vert \lambda _{t}^{\check{s}}\right\Vert
=\sum_{i=1}^{m}\lambda _{t}^{\check{s}_{i}}=1\right\} .
\end{equation}

\subsubsection{Spatial max-stability}

\begin{definition}
Let $Y=\left\{ Y_{s(h)},s\in S,t\in T\right\}$ be a spatial process with
parametric joint distribution $H_{t}^{\check{s}}.$\textit{\ The following statements are
satisfied \ a sufficient condition for the process }$H_{t}^{\check{s}}$%
\textit{\ to be a ST-MEV distribution\ is that there exists two
spatio-temporal non-random sequences }$\left\{ \alpha _{n}^{\check{s}}\left(
t\right) >0\right\} $\textit{\ and }$\left\{ \beta _{n}^{\check{s}}\left(
t\right) \in \mathbb{R}\right\} $ such that
\begin{eqnarray}
\lim_{n\uparrow \infty} P\left( \frac{M_{t}^{\check{s}}-\beta _{n}^{\check{s}}\left( t\right) }{\alpha _{n}^{\check{s}}\left(
t\right) }\leq y_{t}^{\check{s}}\right) =\left( H_{1}\left( y_{t}^{\check{s}_{1}}{}\right) ,\dots,H_{m}\left( y_{t}^{\check{s}_{m}}{}\right) \right) .
\label{efoo}
\end{eqnarray}
\textit{where }$M_{t}^{\check{s}\left( i\right) }$\textit{\ is univariate
margins of the spatio-temporal componentwise vector of maxima. }
\end{definition}

As a corrolary of the above definition, (\ref{ifo})
provides, for all $x_{t}=\left( x_{t}^{\left( 1\right) };...;x_{t}^{\left(
m\right) }\right) $\textit{\ in }$\mathbb{R}^{m}\times T$\textit{\ \ the relation}%
\begin{equation}
C_{s(t)}(u_{1}^{t};...;u_{m}^{t})=H_{t}^{\check{s}}\left[\left( H_{1}^{\check{s}_{1}}(u_{1}^{t})\right) ;...;\left( H_{m}^{\check{s}%
_{m}}(u_{m}^{t})\right) \right],
\label{efo}
\end{equation}
where $u_{i}^{t}=H_{i}^{-1}\left(s_{i}(t)\right).$\\
\textit{Note that, for all }$m\in 
\mathbb{N}
$\textit{\ and for all geographical locality s, the spatio-temporal unit
simplex of }$%
\mathbb{R}
^{(m-1}$\textit{\ is given, under the notational by }%
\begin{equation}
\Delta _{t,m}^{\check{s}}=\left\{ \lambda _{t}^{\check{s}}=\left( \lambda
_{1}^{\check{s}_{1}};...;\lambda _{t}^{\check{s}_{m}}\right) \in 
\mathbb{R}
_{+}^{m};\left\Vert \lambda _{t}^{\check{s}}\right\Vert
=\sum_{i=1}^{m}\lambda _{t}^{\check{s}_{i}}=1\right\} .
\end{equation}

\subsubsection{Domain Spatially discordant}
In this sub-section, we consider a geographical domain D made up of several sites that we partition into two sub-domains depending on whether the sites in the domain have a mineral content higher than a reference value or not.
\begin{definition}
(Domain Spatially discordant)We define $N_{k}$-partition of a random vector $X=\left\{ X_{1},\dots ,X_{n},~~n\geq 2\right\} $ (or the partition of X in
the direction of $N_{k}$) by the pairwise vector $\stackrel{\sim }{X}=(\stackrel{\sim }{X}_{N_{k}},\stackrel{\sim }{X}_{\overline{N}_{k}})$ as:

$\bullet \stackrel{\sim }{X}=(X_{N_{k},1},\dots ,X_{N_{k},k})$ is the
k-dimensional marginal vector of X whose component indexes are ordered in
the subset $N_{k}$.

$\bullet \stackrel{\sim }{X}=(X_{\overline{N}_{k},1},\dots ,X_{\overline{N}_{k},n-k})$ is the $(n-k)$ dimensional marginal vector of X whose component
indexes are ordered in $\overline{N}_{k}=C_{N}^{N_{k}}$, the complementary
of $N_{k}$in N.
\end{definition}

Similarly, every realisation $x=(x_{1},\dots ,x_{n})$ of X can be decomposed
into two parts 
\begin{eqnarray*}
x=(\stackrel{\sim }{x}_{N_{k}},\stackrel{\sim }{x}_{\overline{N}_{k}})~~where~~\stackrel{\sim }{x}_{N_{k}}=(x_{N_{k},1},\dots
,x_{N_{k},k})~~and~~\stackrel{\sim }{x}=(x_{\overline{N}_{k},1},\dots ,x_{\overline{N}_{k},k})
\end{eqnarray*}%
are, respectively realizations of vectors $\stackrel{\sim }{X}_{N_{k}}$ et $%
\stackrel{\sim }{X}_{\overline{N}_{k}}$. If H, $H_{N_{k}}$ and $H_{\overline{N}_{k}}$ denote the distribution functions of the random vectors X, $%
\stackrel{\sim }{X}_{N_{k}}$ and $stackrel{\sim }{X}_{\overline{N}_{k}}$,
then for all realization $x=(x_{1},\dots ,x_{n})$ of X we have 
\begin{eqnarray*}
H_{N_{k}}(\stackrel{\sim }{x}_{N_{k}}) &=&\displaystyle\lim_{\stackrel{\sim }{x}_{\overline{N}_{k}}\rightarrow \stackrel{\sim }{x}_{\overline{N}%
_{k}}^{\ast }}H(x)~and~H_{\overline{N}_{k}}(\stackrel{\sim }{x}_{\overline{N}_{k}})=\displaystyle\lim_{\stackrel{\sim }{x}_{N_{k}}\rightarrow \stackrel{\sim }{x}_{N_{k}}^{\ast }}H(x) \\
~where~\stackrel{\sim }{x}_{N_{k}}^{\ast } &=&(x_{N_{k},1}^{\ast },\dots
,x_{N_{k},k}^{\ast })~and~\stackrel{\sim }{x}^{\ast }=(x_{\overline{N}%
_{k},1}^{\ast },\dots ,x_{\overline{N}_{k},k}^{\ast })
\end{eqnarray*}%
are the upper endpoints of the functions $H_{N_{k}}$ and $\stackrel{\sim }{X}%
_{\overline{N}_{k}}$.

\begin{definition}
Given a $N_{k}-$partition $\stackrel{\sim }{X}=\left\{ \left( X_{N_{k}},
\stackrel{\sim }{X}_{\overline{N}_{k}}\right) ,1\leq k\leq n\right\} $ of $X=\left\{ X_{1},\dots ,X_{n}\right\} $ we define the upper $N_{k}-$%
discordance degree of X as the conditional probability given for all $%
x=(x_{1},\dots ,x_{n})\in \mathbb{R}_{n}$ by $\delta _{N_{k}}^{+}(x)=P(%
\stackrel{\sim }{X}_{N_{k}}>\stackrel{\sim }{x}_{N_{k}}/\stackrel{\sim }{X}_{%
\overline{N}_{k}}\leq \stackrel{\sim }{x}_{\overline{N}_{k}})$.
Similarly, the lower $N_{k}-$discordance degree of X is defined, for all $%
x=(x_{1},\dots ,x_{n})\in \mathbb{R}_{n}$ by $\delta _{N_{k}}^{-}(x)=P(%
\stackrel{\sim }{X}_{N_{k}}\leq \stackrel{\sim }{x}_{N_{k}}/stackrel{\sim }{X}%
_{\overline{N}_{k}}>\stackrel{\sim }{x}_{\overline{N}_{k}})$.\newline
\end{definition}

\subsubsection{Spatial discordance rate}
In the modeling of spatial extremes, the calculation of quantiles is very important.
The following definition characterizes the probability that one of the
margins $\stackrel{\sim }{X}_{N_{k}}$ and $\stackrel{\sim }{X}_{\overline{N}%
_{k}}$ exceeds $1/2$, while the values taken by the other are less than $%
1/2. $

\begin{definition}
Given the distribution H of a multivariate random
vector $X=\left\{ X_{1},\dots ,X_{n}),n\geq 2\right\} $ with univariate
margins $H_{i},1\leq i\leq n$ we define the upper $N_{k}-$median discordance
degree of H by the real number denoted by $\delta _{N_{k},H}^{+}$ such as: $%
\delta _{N_{k},H}^{+}=\delta _{N_{k}}^{+}\left[ (H_{1}^{-1}(\frac{1}{2}%
),\dots ,H_{n}^{-1}(\frac{1}{2}))\right] $ where $H_{i}^{-1}$ is quantile
function of $H_{i}.$ Similarly, the lower $N_{k}-$median discordance degree
of H is defined by $\delta _{N_{k},H}^{-}=\delta _{N_{k}}^{-}\left[
(H_{1}^{-1}(\frac{1}{2}),\dots ,H_{n}^{-1}(\frac{1}{2}))\right] $
\end{definition}

\subsection{Spatial MDA and inferential properties}

Spatial max-stable processes generalize the Multivariate Extreme
Value (MEV) laws to the spatial context and hold information on the spatial
dependence structure. Specifically a constructive definition is given as
follows.

\begin{definition}
\textit{(see \cite{ato3}) Let S be a spatial domain. We say that the process }$%
\left\{ Y(s),s\in \mathcal{S}\right\} $\textit{\ is max-stable if all the
marginal distributions are max-stable, that is to say exists for all }$n$%
\textit{\ two suites of continuous functions }$\left\{ \alpha
_{n}(s)\right\} >0$\textit{\ and }$\left\{ \beta _{n}(s)\right\} $\textit{\
such as: }%
\begin{eqnarray*}
\lim_{n\rightarrow +\infty }\left\{\frac{\max_{i=1}^{n}X(s_{i})-\beta
_{n}(s)}{\alpha _{n}(s)}\right\} =\left\{ Y(s),s\in \right\} ,
\end{eqnarray*}
\end{definition}

with $X(s_{i})$ independent and identically distributed
copies of $X$, a stochastic process representing for example a meteorological parameter. Without loss of generality and to consider only
the spatial dependence of $\left\{ Y(s),s\in \right\} $, it is more convenient to transform $Y$ into a simple max-stable process, ie
with Frechet margins unit (ie $GEV\left( 1,1,1\right) $) for all $s\in S$ via the following transformation: 
\begin{eqnarray*}
Y_{F}(s)=\frac{-1}{\log \left\{ G_{\mu (s),\sigma (s),\xi (s)}\left(
Y(s)\right) \right\} }.
\end{eqnarray*}

The study of extreme value theory have been extended both to spatial and
multivariate contexts these last years. This section gives the relationship
between the extremal coefficient via copula.

\subsection{Stability spatial marginal}

\begin{theorem}
Let G be a spatially max-stable multivariate distribution.. \ \textit{Then,
under the condition of the max-stability of }$\left\{ Y_{{}}\right\} $%
\textit{, the distributions underlying the marginal processes }$\left\{
Y_{t_{A}}\mathit{\ }\right\} $ \textit{and }$\left\{ Y_{t_{\bar{A}}}\mathit{%
\ }\right\} $\textit{\ lies \ respectively in the MDA of two parametric MEV
models G}$_{A}$\textit{\ and G}$_{\bar{A}}$\textit{. Moreover the
distributions G}$_{A}$\textit{\ and G}$_{\bar{A}}$\textit{\ are marginal
distributions of G.}
\end{theorem}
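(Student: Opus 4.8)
The plan is to reduce the theorem to one structural fact: along the $N_k$-partition of $N=\{1,\dots,n\}$ into a block $A=N_k$ and its complement $\bar A=\overline{N}_k$, passing to a block marginal commutes both with the formation of componentwise maxima and with componentwise affine normalization. Once this is in place, both assertions — membership in a max-domain of attraction (MDA) and the limit laws being marginals of $G$ — fall out of the convergence that defines max-stability. First I would fix notation: write $\pi_A$ for the coordinate projection sending a vector indexed by $N$ to its sub-vector indexed by $A$, and $\pi_{\bar A}$ for the projection onto $\bar A$. The starting point is the constructive definition of max-stability applied to $\{Y\}$: there exist coordinatewise normalizing sequences $\{\alpha_n(s)>0\}$ and $\{\beta_n(s)\}$ with
\[
\frac{\max_{i=1}^{n} X(s_i)-\beta_n(s)}{\alpha_n(s)}\stackrel{f.d.d.}{\longrightarrow} Y(s)\qquad (n\to\infty),
\]
the limit process having max-stable law $G$.

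The key observation is that the componentwise maximum $\max_i$ and the coordinatewise affine map $x\mapsto (x-\beta_n)/\alpha_n$ both act componentwise, hence commute with $\pi_A$: projecting the normalized maximum of the full vector yields exactly the normalized maximum of the block-$A$ vector, with normalizers $\pi_A\alpha_n$ and $\pi_A\beta_n$. Since $\pi_A$ is continuous, the continuous mapping theorem transfers the display above to
\[
\frac{\max_{i=1}^{n}\pi_A X(s_i)-\pi_A\beta_n(s)}{\pi_A\alpha_n(s)}\stackrel{f.d.d.}{\longrightarrow}\pi_A Y(s)\qquad (n\to\infty).
\]
This exhibits the block-$A$ marginal process $\{Y_{t_A}\}$ as a normalized limit of componentwise maxima of its own i.i.d.\ copies, so it lies in a max-domain of attraction whose limit law $G_A$ is the law of $\pi_A Y$. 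By the marginalization relation stated in the excerpt — where a block marginal is obtained from $G$ by sending the complementary coordinates to their upper endpoints — this law $G_A$ is precisely a marginal distribution of $G$, which is the final assertion. Running the identical argument with $\pi_{\bar A}$ produces $G_{\bar A}$.

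It remains to check that $G_A$ (and $G_{\bar A}$) is a genuine \emph{parametric MEV model}, i.e.\ max-stable rather than merely a limit law. Here I would argue from the functional equation: max-stability of $G$ is equivalent to $G^{n}(\alpha_n y+\beta_n)=G(y)$ for all $n$ and $y$, and marginalizing both sides — using that $t\mapsto t^{n}$ is continuous, so the $n$-th power commutes with the limit defining the block marginal — gives
\[
\bigl[G_A\bigl(\pi_A\alpha_n\,\tilde y_A+\pi_A\beta_n\bigr)\bigr]^{n}=G_A(\tilde y_A),
\]
which is exactly the max-stability equation for $G_A$. Its univariate margins then lie in one of the Fréchet, Weibull, or Gumbel families, i.e.\ have the form \eqref{itp}, so $G_A$ is a parametric MEV model.

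The main obstacle I anticipate is not the algebra but the interchange of limits: one must justify that block-marginalization (a limit of arguments to their upper endpoints) commutes with the $n\to\infty$ maxima limit and with the $n$-th power. For the MDA statement this is clean, since projection is continuous and commutes with componentwise maxima, so the continuous mapping theorem applies directly; for the max-stability of $G_A$ it rests on continuity of $t\mapsto t^{n}$ together with the nondegeneracy needed to keep the projected limit nondegenerate. Care is also required to confirm that the normalizing sequences restrict consistently to the block — that $\pi_A\alpha_n$ and $\pi_A\beta_n$ really are the correct normalizers for the projected process — but this holds precisely because the normalization is coordinatewise.
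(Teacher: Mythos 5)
Your proof is correct, and it follows the same overall plan as the paper's --- marginalize the defining max-limit of the full process to the block $A$ and its complement --- but the way you justify the key step is genuinely different and in fact tighter. The paper works entirely at the level of distribution functions: it defines $G_A$ by sending the complementary arguments $x_{t_{\bar A}}$ to their upper endpoints inside the limit $\lim_{n}P\left(\cap_{i}\left\{\cdot\right\}\right)$, and then interchanges the two limits ($x_{t_{\bar A}}\to x^{\ast}_{t_{\bar A}}$ and $n\to\infty$) to identify $G_A$ as the limit law of the block-$A$ componentwise maxima; that interchange is precisely the ``obstacle'' you flag at the end, and the paper asserts it without justification. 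Your route through random vectors dissolves the issue: projection commutes exactly, before any limit is taken, with componentwise maxima and with coordinatewise affine normalization, so restriction of the f.d.d.\ convergence (equivalently, the continuous mapping theorem) hands you the marginal convergence with no iterated-limit argument at all. You also supply something the paper leaves to citation: the paper obtains the MEV character of $G_A$ by invoking Corollary 4 of its reference \cite{ato3}, whereas you derive max-stability of $G_A$ directly by marginalizing the functional equation $G^{n}(\alpha_n y+\beta_n)=G(y)$. Your remaining care point --- that $\alpha_{n,\bar A}\,y_{\bar A}+\beta_{n,\bar A}$ tends to the correct endpoint --- is standard, since the affine maps in the functional equation preserve the support of $G$; alternatively, note that $\pi_A Y$ is itself max-stable because maxima of i.i.d.\ copies of $\pi_A Y$ are projections of maxima of copies of $Y$. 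Net effect: same theorem, same decomposition, but your argument is the more rigorous of the two on exactly the step where the paper is weakest.
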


\begin{proof}
\textit{\ Let }$\left\{ \alpha _{n}>0\right\} $\textit{\ and }$\left\{ \beta
_{n}\in 
\mathbb{R}
\right\} $\textit{\ be the non-random \ normalizing sequences of H. Then,
their corresponding space and time extensions }$\left\{ \alpha _{n}^{\check{s%
}}\left( t\right) >0\right\} $\textit{\ and }$\left\{ \beta _{n}^{\check{s}%
}\left( t\right) \in 
\mathbb{R}
\right\} $\textit{\ are defined on the set, }$%
\mathbb{N}
^{\ast }\times S\times T,$ such that
\begin{eqnarray*}
\lim_{n\uparrow \infty} P\left( \frac{M_{t}^{\check{s}}-\beta _{n}^{\check{s}}\left( t\right) }{\alpha_{n}^{\check{s}}\left(t\right) }\leq y_{t}^{\check{s}}\right) =\lim_{n\uparrow\infty} P\left[ \cap_{i=1}^{n}\left( \frac{
M_{t}^{\check{s}_{i}}-\beta _{i}^{\check{s}_{i}}\left( t\right) }{\alpha
_{i}^{\check{s}_{i}}\left( t\right) }\leq y_{t}^{\check{s}_{i}}\right) %
\right]
\end{eqnarray*}%
\textit{Then, }%
\begin{eqnarray*}
\lim_{n\rightarrow\infty} P\left( \frac{M_{t}^{\check{s}}-\beta _{n}^{\check{s}}\left( t\right) }{\alpha_{n}^{\check{s}}\left(
t\right)}\leq y_{t}^{\check{s}}\right) =\lim_{n\rightarrow\infty} P\left[ \cap_{i=1}^{n}\left( Y_{t}^{\check{s}_{i}}\leq \alpha _{i}^{\check{s}}\left( t\right) y_{i}^{\check{s}}\left( t\right) +\beta _{i}^{\check{s}}\left( t\right) \right) \right] .
\end{eqnarray*}

That is equivalent, due to independence, to 
\begin{eqnarray*}
\lim_{n\rightarrow\infty} P\left( \frac{M_{t}^{\check{s}}-\beta _{n}^{\check{s}}\left( t\right) }{\alpha _{n}^{\check{s}}\left(
t\right) }\leq y_{t}^{\check{s}}\right) =\lim_{n\rightarrow\infty} \left( \prod_{i=1}^{n}P\left[ \left( X_{i}^{\check{s}}\leq \alpha _{i}^{\check{s}}\left( t\right) y_{i}^{\check{s}}\left( t\right) +\beta _{i}^{\check{s}}\left( t\right) \right) \right]
\right) .
\end{eqnarray*}

So, there exists a max-stable distribution G whose max-domain of attraction contains the MEV H. Then,
\begin{eqnarray*}
\lim_{n\rightarrow\infty} P\left( \frac{M_{t}^{\check{s}}-\beta _{n}^{\check{s}}\left( t\right) }{\alpha _{n}^{\check{s}}\left(
t\right) }\leq y_{t}^{\check{s}}\right) =\lim_{n\rightarrow\infty} \left[ G\left( \alpha _{i}^{\check{s}}y_{i}{}^{\check{s}}\left( t\right) +\beta _{i}^{\check{s}}\left( t\right) \right) ,\dots \alpha
_{i}^{\check{s}}\left( t\right) y^{\check{s}}\left( t\right) +\beta _{i}^{\check{s}}\right] ^{n}.
\end{eqnarray*}

Finally, since the distribution G is max-stable 
\begin{eqnarray*}
\lim_{n\rightarrow\infty} P\left( \frac{M_{t}^{\check{s}}-\beta _{n}^{\check{s}}\left( t\right) }{\alpha _{n}^{\check{s}}\left(
t\right) }\leq y_{t}^{\check{s}}\right) =\left( H_{1}\left( y_{t}^{\check{s}\left( 1\right) }{}\right) ,\dots,H_{n}\left( y_{t}^{\check{s}\left( n\right)
}{}\right) \right) .
\end{eqnarray*}

The process $\left\{ Y_{t}\right\} $ is max-stable by assumption, so
Corollary 4 (see \cite{ato3}) implies that the underlying distribution lies in the MDA\ of \ a
parametric extreme values model G.\ Equivalently there exist the normalizing
sequences as in $(3.3)$ such as, for all $x_{t}\in \chi \times T,$ 
\begin{equation}
\lim_{n\rightarrow\infty} P\left( \cap_{i=1}^{n}\left\{ \frac{M_{i}\left( x_{t}\right) -\mu
_{i}\left( x_{t}\right) }{\sigma _{i}\left( x_{t}\right) }\leq y_{i}\right\}
\right) =G\left( y_{1}\left( x_{t}\right) ,\dots,y_{n}\left( x_{t}\right)
\right) .
\end{equation}

Setting $x_{t}=(x_{t_{A}},x_{t_{\bar{A}}})\in \chi \times T,$ the marginal
distribution $G_{A}$ of G defined on the sub-domain $\chi _{A}$ is obtained
asymptotically by
\begin{eqnarray*}
G_{A}(y\left( \mathit{\ }x_{t_{A}}\right) ) &=&\lim_{x_{t_{\bar{A}}}\rightarrow x_{t_{\bar{A}}}^{\ast }} G(y\left(
x_{t}\right) ) \\
&=&\lim_{x_{t_{\bar{A}}}\rightarrow x_{t_{\bar{A}}}^{\ast }}\left[ \lim_{n\rightarrow\infty} P\left( \cap_{i=1}^{n}\left\{ \frac{M_{i}\left(
x_{t}\right) -\mu _{i}\left( x_{t}\right) }{\sigma _{i}\left( x_{t}\right) }\leq y_{i}\right\} \right) \right]
\end{eqnarray*}
\textrm{\ where }$x_{t_{\bar{A}}}^{\ast }$ \textrm{is the right endpoint of
the distribution }$G_{\bar{A}}.$ Then, it follows that%
\begin{eqnarray*}
G_{A}(y\left( x_{t_{A}}\right) ) &=&\lim_{n\rightarrow\infty} \left[ \lim_{x_{t_{\bar{A}}}\rightarrow x_{t_{\bar{A}}}^{\ast }} P\left( \cap_{i=1}^{n}\left\{ \frac{M_{i}\left( x_{t}\right) -\mu _{i}\left( x_{t}\right) }{\sigma
_{i}\left( x_{t}\right) }\leq y_{i}\right\} \right) \right] \\
&=& \lim_{n_{A}\longrightarrow +\infty} \left[ P\left( \cap_{i_{A}=1}^{n_{A}} \left\{ \frac{M_{i_{A}}\left(
x_{t_{A}}\right) -\mu _{i_{A}}\left( x_{t_{A}}\right) }{\sigma
_{i_{A}}\left( x_{t_{A}}\right) }\leq y\left( x_{t_{A}}\right) \right\}
\right) \right]
\end{eqnarray*}

where the index $i_{A}$ is such as that $x_{t_{A}}\in \chi _{A}$.
Therefore, there exist marginal ST normalizing sequences $\left( \sigma
_{n_{A}}\left( x_{t_{A}}\right) ,\mu _{n_{A}}\left( x_{t_{A}}\right) \right)
\in \mathbb{R}^{+}\times \mathbb{R}$ such that\textrm{, the corresponding marginal component-wise maxima\
converge to }$G_{A}$\textrm{\ according\ equality }$\left( 3.2\right) .$ 
\textrm{Finally, the underlying distribution of the ST marginal process }$%
Y_{A}$ lies in the MDA of the n$_{A}$-dimensional parametric MEV
distribution $G_{A}$, $n_{A}=\left\vert \chi _{A}\right\vert $, the number
of observations sites in the sub-domain $\chi _{A}.$
\end{proof}

\subsubsection{Spatial stability of the MDA}

\begin{theorem}
Let $C_{H}$ be the spatial copula of the process $\left\{ Y\left( x\right)
\right\} $. Then, under the key assumption, the copula $C_{H}$ converge to a
spatial extremal copula $C_{G}$.
\end{theorem}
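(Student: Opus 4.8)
The plan is to reduce the statement to a standard domain-of-attraction argument for copulas, using Sklar's theorem as the bridge between distributions and copulas. First I would fix the spatio-temporal site $s(t)$ and write, via the representation (\ref{efo}), the copula $C_H=C_{s(t)}$ of the process as $C_H(u_1,\dots,u_m)=H_t^{\check{s}}\bigl(H_1^{-1}(u_1),\dots,H_m^{-1}(u_m)\bigr)$. The preceding theorem guarantees that the underlying joint law $H_t^{\check{s}}$ belongs to the max-domain of attraction of a parametric MEV law $G$; this is precisely the key assumption invoked. I would therefore record the defining convergence $\bigl[H(\alpha_n y+\beta_n)\bigr]^n\rightarrow G(y)$ together with the companion marginal convergences $H_i^n\bigl(\alpha_n^{(i)}y_i+\beta_n^{(i)}\bigr)\rightarrow G_i(y_i)$ for each $1\leq i\leq m$.

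Next I would express the copula of the componentwise maxima in terms of $C_H$. Writing $M_n$ for the vector of componentwise maxima of $n$ i.i.d.\ copies, the marginal of its $i$-th component is $H_i^n$ while its joint law is $H^n$; hence by Sklar's theorem its copula $C_n$ satisfies the elementary identity $C_n(u_1,\dots,u_m)=\bigl[C_H(u_1^{1/n},\dots,u_m^{1/n})\bigr]^n$. This is the central algebraic step, since it isolates the dependence structure from the margins, which is exactly what makes the copula the right object to track.

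It then remains to pass to the limit $n\rightarrow\infty$ in this identity. Because $G$ is continuous (being max-stable with nondegenerate margins), its margins $G_i$ and hence its copula $C_G(u)=G\bigl(G_1^{-1}(u_1),\dots,G_m^{-1}(u_m)\bigr)$ are well defined and continuous; by the continuity of the copula map under weak convergence with a continuous limit, the convergence $H^n\rightarrow G$ of the joint laws together with the marginal convergences forces $C_n\rightarrow C_G$ uniformly on the unit cube. I would conclude that $C_G(u_1,\dots,u_m)=\lim_{n\rightarrow\infty}\bigl[C_H(u_1^{1/n},\dots,u_m^{1/n})\bigr]^n$, identifying the limit as the extremal copula generated by $C_H$.

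The main obstacle is the justification of the limit interchange in the last step: one must show that pointwise convergence of $H^n$ to the continuous $G$ upgrades, via P\'olya's theorem, to locally uniform convergence, and that this is inherited by the generalized inverses of the margins, so that $C_n$ evaluated at the quantile arguments converges to $C_G$ at every $u$. Once this is secured, the limit copula automatically satisfies the max-stability functional equation $C_G(u_1^{t},\dots,u_m^{t})=\bigl[C_G(u_1,\dots,u_m)\bigr]^{t}$ for every $t>0$, which follows by writing $C_G(u^{t})=\lim_k\bigl[C_H^k(u^{1/k})\bigr]^{t}=\bigl[\lim_k C_H^k(u^{1/k})\bigr]^{t}=C_G(u)^{t}$ and invoking the continuity of $x\mapsto x^{t}$. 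This confirms that $C_G$ is a spatial extremal copula and completes the argument.
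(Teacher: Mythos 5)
Your proposal is correct, and it is the classical domain-of-attraction argument for copulas, but it is not the route the paper takes. You isolate the dependence structure through the copula-of-maxima identity $C_n(u_1,\dots,u_m)=\bigl[C_H(u_1^{1/n},\dots,u_m^{1/n})\bigr]^n$ and prove directly that $C_n\rightarrow C_G$, supplying the analytic fine print (P\'olya's upgrade of $H^n\rightarrow G$ to locally uniform convergence, convergence of the marginal quantile functions) before verifying the max-stability equation $C_G(u^t)=C_G(u)^t$ for the limit. The paper instead expands $H^n$ at the normalized arguments $\sigma_{n,i}y_i+\mu_{n,i}$ in two ways --- once through the limit law, $\lim_n H^n(\cdots)=G=C_G(G_1,\dots,G_s)$, and once through Sklar's theorem as $C_H^n\bigl(H_1(\sigma_{n,1}y_1+\mu_{n,1}),\dots,H_s(\sigma_{n,s}y_s+\mu_{n,s})\bigr)$ --- then equates the two limits and appeals to the uniqueness of the copula of a continuous distribution. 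The two computations are in fact linked, since $H_i(\sigma_{n,i}y_i+\mu_{n,i})$ behaves like $G_i(y_i)^{1/n}$, which is exactly the $u^{1/n}$ substitution you make explicit. What your route buys is rigor at the one step that matters: the paper justifies pushing the limit inside $C_H^n$ by asserting that ``the copula $C_H$ verifies the property of max-stability,'' which is essentially the conclusion being sought, so its argument is circular as written, and it then overshoots to the claim that $C_H$ itself \emph{is} max-stable (hence extremal), a statement that is generally false for an arbitrary $H$ in the domain of attraction of $G$. Your direct convergence proof needs no such assumption, proves exactly the convergence asserted in the theorem, and yields the identification $C_G(u)=\lim_{n\rightarrow\infty}\bigl[C_H(u_1^{1/n},\dots,u_m^{1/n})\bigr]^n$ as a theorem rather than a hypothesis.
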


\begin{proof}
\textit{Let consider the following notation of component-wise vector of
spatio-temporal process. }%
\begin{eqnarray*}
Y_{t}^{\check{s}}\left( s\right) =Y\left( t,s\right) =\left\{ \left(
Y_{t_{1}}\left( s_{1}\right) ;...;Y_{t_{n}}\left( s_{n}\right) \right) ,s\in
S,t\in T\right\}
\end{eqnarray*}%
\textit{\ \ is the response vector at a given time t from a spatio-temporal
and max-stable model.}

\textit{So, under this notation a realisation }$y\left( t,s\right)
=y_{t}\left( s\right) $\textit{\ of }$\ Y_{t}\left( s\right) $\textit{\ is
obtained as}%
\begin{equation}
y_{i,t}\left( s\right) =\mu _{i,t}\left( s_{i}\right) +\frac{\sigma
_{i,t}\left( s_{i}\right) }{\xi _{i,t}\left( s_{i}\right) }\left[
s_{t}\left( s\right) ^{\xi _{t}^{\left( i\right) }\left( s\right) }-1\right] ~~for~~i=1,\dots,m.
\end{equation}
\textit{Equivalently, it comes that, for a given site s }$D_{N}=\left\{
s_{1},...,s_{N}\right\} \subset \mathbb{R}^{2}$\textit{\ }%
\begin{equation}
P\left( \frac{Y_{1}\left( s\right) -b_{n}^{\left( 1\right) }\left( s\right) 
}{a_{n}^{\left( 1\right) }\left( s\right) }\leq y_{1}\left( s\right) ;...;%
\frac{Y_{N}\left( s\right) -b_{n}^{\left( N\right) }\left( s_{n}\right) }{%
a_{n}^{\left( N\right) }\left( s_{n}\right) }\leq y_{N}\left( s\right)
\right) ^{n}=H\left( y_{1}\left( s\right) ;...;y_{N}\left( s\right) \right) .
\end{equation}%
For simplicity reasons, let denote, like in the paper $\left[ 6%
\right] $\textit{\ that }$Y\left( t,s\right) =Y_{t}^{\check{s}}$\textit{\
(which is different from }$Y_{t}^{s}$\textit{, the s-th power of }$Y_{t}).$%
Then, under this notational assumption the spatialized version of
the joint distribution function F of Y is given by $F_{t}^{\check{s}}$%
for given vector of realization $y_{t}^{\check{s}}=\left(
y_{t}^{\left( 1\right) }\left( s\right) ,...,y_{t}^{\left( m\right) }\left(
s\right) \right) $ such as
\begin{eqnarray*}
F_{t}^{\check{s}}\left( y_{1}\left( t,s\right) ,\dots,y_{m}\left( t,s\right)
\right) =F\left( y_{1}^{\check{s}_{1}}\left( t\right)
;\dots;y_{m}^{s_{m}}\left( t\right) \right) =F\left( y_{1}\left( t,s\right)
;\dots;y_{m}\left( t,s\right) \right) .
\end{eqnarray*}%
In the same vein, the spatio-temporal copula associated to the
distribution G via Sklar parametrization (1) will be denoted as $C_{t}^{%
\check{s}}=\left( C_{1,t}^{\check{s}};\dots;C_{m,t}^{\check{s}}\right) .$

The key assumption insures that the distribution of the process $\left\{
Y\left( x\right) \right\} $ lies in the domain of attraction of a
multivariate EVdistribution G. In particular, marginally there exist
appropriate spatial coefficients of normalization $\left\{ \sigma
_{n,i}\left( x_{i}\right) >0\right\} $ and $\left\{ \mu _{n,i}\left(
x_{i}\right) \in 
\mathbb{R}
\right\} $ such as%
\begin{equation}
\lim_{n\rightarrow +\infty} H_{i}^{n}\left( \sigma
_{n,i}\left( x_{i}\right) y_{i}\left( x_{i}\right) +\mu _{n,i}\left(x_{n,i}\right) \right) = G_{i}\left( y_{i}\left( x_{i}\right) \right) ~~for ~~all ~~i=1,\dots,n.
\end{equation}

More generally, in one hand, applying (7) to the joint dependence structure,
it follows that 
\begin{eqnarray}
\left.
\begin{tabular}{l}
$\lim_{n\rightarrow +\infty} H^{n}\left( \sigma _{n,1}\left(
x_{1}\right) y_{1}\left( x_{1}\right) +\mu _{n,1}\left( x_{n,1}\right)
;\dots;\sigma _{n,s}\left( x_{s}\right) y_{s}\left( x_{s}\right) +\mu
_{n,s}\left( x_{n,s}\right) \right) $ \\ 
\\ 
$\ =G\left( y_{1}\left( x_{1}\right) ;\dots;y_{s}\left( x_{s}\right) \right)
=C_{G}\left( G\left( y_{1}\left( x_{1}\right) \right) ;\dots;G_{s}\left(
y_{s}\left( x_{s}\right) \right) \right) .$
\end{tabular}
\right\rbrace
\label{iit1}
\end{eqnarray}

On the other hand however, 
\begin{equation}
\left. 
\begin{tabular}{l}
$H^{n}\left( \sigma _{n,1}\left( x_{1}\right) y_{1}\left( x_{1}\right) +\mu
_{n,1}\left( x_{n,1}\right) ;...;\sigma _{n,s}\left( x_{s}\right)
y_{s}\left( x_{s}\right) +\mu _{n,s}\left( x_{n,s}\right) \right) $ \\ 
\\ 
$\ =C_{H}^{n}\left( H_{1}\left( \sigma _{n,1}\left( x_{1}\right) y_{1}\left(
x_{1}\right) +\mu _{n,1}\left( x_{1}\right) \right) ;...;H_{s}\left( \sigma
_{n,s}\left( x_{s}\right) y_{s}\left( x_{s}\right) +\mu _{n,s}\left(
x_{s}\right) \right) \right) .$%
\end{tabular}%
\right\rbrace
\end{equation}

Moreover, the copula C$_{H}$ verifies the property of max-stability given by
the relation (5).

Then, it results an asymptotical copula such as%
\begin{equation}
\left. 
\begin{tabular}{l}
$\lim_{n\rightarrow +\infty} H^{n}\left( \sigma _{n,1}\left(
x_{1}\right) y_{1}\left( x_{1}\right) +\mu _{n,1}\left( x_{n,1}\right)
;\dots;\sigma _{n,s}\left( x_{s}\right) y_{s}\left( x_{s}\right) +\mu
_{n,s}\left( x_{n,s}\right) \right) $ \\ 
\\ 
$\ =\lim_{n\rightarrow +\infty} C_{H}^{n}\left( H_{1}\left(
\sigma _{n,1}\left( x_{1}\right) y_{1}\left( x_{1}\right) +\mu _{n,1}\left(
x_{1}\right) \right) ;\dots;H_{s}\left( \sigma _{n,s}\left( x_{s}\right)
y_{s}\left( x_{s}\right) +\mu _{n,s}\left( x_{s}\right) \right) \right) $ \\ 
\\ 
$\ \ =C_{H}\left( \left( Gy_{1}\left( x_{1}\right) \right) ;\dots;G_{s}\left(
y_{s}\left( x_{s}\right) \right) \right) .$
\end{tabular}
\right\rbrace
\label{iit}
\end{equation}

Therefore, using simultaneously \ref{iit1} and \ref{iit} it follows that, for all
realization $y(x)$ of $\left\{ Y\left( x\right) \right\} $%
\begin{eqnarray*}
C_{G}\left( G\left( y_{1}\left( x_{1}\right) \right) ;...;G_{s}\left(
y_{s}\left( x_{s}\right) \right) \right) =C_{H}\left( \left( Gy_{1}\left(
x_{1}\right) \right) ;...;G_{s}\left( y_{s}\left( x_{s}\right) \right)
\right) .
\end{eqnarray*}%
Therefore, the uniqueness of the copula associated to the continuous
distribution H (Sklar, 1959) allows us to conclud that $C_{H}$ is
max-stable. Finally, the max-stability implies that $C_{H}$ is an extremal
copula.
\end{proof}

\section{Mains results}

The extremal coefficient is the natural dependence measures for extreme
value models which provides the magnitude of the asymptotic dependence of a
random field at two points of the domain.

\subsection{ Extremal dependence index and Copulas function}

The study of extreme value theory have been extended both to spatial and
multivariate contexts these last years. This section gives the relationship
between the extremal coefficient via copula.

\begin{theorem}
Let $\left\{ Z\left( s\right) ,s\in \mathbb{R}^{2}\right\} $ be stationary max-stable random process with Fréchet
marginal. Then, the extremal copula-based coefficient is given by:%
\begin{equation}
\theta (h)=\left\{ 
\begin{array}{cc}
u_{\beta }(z)\left[ \mu +\frac{\int_{0}^{1}F_{Z}^{-1}(u)dC_{h}(u,u)-\mu }{\Gamma (1-\xi )}\right] & if~\xi \neq 0 \\ 
&  \\ 
\exp \left\{ \frac{\int_{0}^{1}F_{Z}^{-1}(u)dC_{h}(u,u)-\mu }{\sigma }%
\right\} & if~\xi =0%
\end{array}%
\right. ,  \label{opor}
\end{equation}%
where 
\[
u_{\beta }(z)=\left\{ 
\begin{array}{cc}
\left[ 1+\xi \left( \frac{z-\mu }{\sigma }\right) \right] ^{1/\xi } & 
if~1+\xi \left( \frac{z-\mu }{\sigma }\right) >0 \\ 
\\
0 & if~1+\xi \left( \frac{z-\mu }{\sigma }\right) \leq 0%
\end{array}%
\ \right. ;for~all~z\in \mathbb{R},
\]%
and 
\[
\forall z>0,~~\Gamma (z)=\int_{0}^{+\infty }t^{z-1}e^{-t}dt.
\]
\end{theorem}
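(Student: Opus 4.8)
The plan is to translate the diagonal integral $\int_{0}^{1}F_{Z}^{-1}(u)\,dC_{h}(u,u)$ into an expected pairwise maximum, and then to exploit the GEV form of the common margin to solve for $\theta(h)$. First I would fix the probabilistic meaning of that integral. Setting $U=F_{Z}(Z(s))$ and $V=F_{Z}(Z(s+h))$, continuity of $F_{Z}$ makes $(U,V)$ distributed according to the pairwise copula $C_{h}$, and since $F_{Z}^{-1}$ is nondecreasing one has $\max(Z(s),Z(s+h))=F_{Z}^{-1}(\max(U,V))$. Because $P(\max(U,V)\le u)=P(U\le u,\,V\le u)=C_{h}(u,u)$, the diagonal section $u\mapsto C_{h}(u,u)$ is exactly the law of $\max(U,V)$, whence
\[
\int_{0}^{1}F_{Z}^{-1}(u)\,dC_{h}(u,u)=\mathbb{E}\left[\max\big(Z(s),Z(s+h)\big)\right].
\]
Thus the statement reduces to expressing $\theta(h)$ through this expected pairwise maximum.

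Next I would invoke the defining relation of the extremal coefficient for the stationary max-stable field. Taking $p=F_{Z}(z)$ in the bivariate instance of the identity $P[F_{1}(x_{1})\le p,\dots]=p^{\theta}$ recorded earlier yields $C_{h}(u,u)=u^{\theta(h)}$, so $\max(Z(s),Z(s+h))$ has distribution $F_{Z}^{\theta(h)}=G_{\mu,\sigma,\xi}^{\theta(h)}$. Since $G_{\mu,\sigma,\xi}$ is max-stable, this power is again a GEV law, with scale inflated to $\sigma\,\theta(h)^{\xi}$ and location shifted by $\tfrac{\sigma}{\xi}\big(\theta(h)^{\xi}-1\big)$ when $\xi\neq 0$; the unit-Fréchet standardizing map $u_{\beta}$ is precisely what linearizes this location--scale bookkeeping and accounts for the prefactor $u_{\beta}(z)$ in the statement.

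Then I would carry out the mean computation. Substituting $dC_{h}(u,u)=\theta(h)\,u^{\theta(h)-1}\,du$ together with the GEV quantile $F_{Z}^{-1}(u)=\mu+\tfrac{\sigma}{\xi}\big[(-\log u)^{-\xi}-1\big]$, the change of variable $t=-\log u$ reduces the integral to $\int_{0}^{\infty}\big[\mu+\tfrac{\sigma}{\xi}(t^{-\xi}-1)\big]\theta(h)\,e^{-\theta(h)t}\,dt$. Evaluating this (valid for $\xi<1$) produces the Gamma factor through $\int_{0}^{\infty}t^{-\xi}e^{-\theta(h)t}\,dt=\theta(h)^{\xi-1}\Gamma(1-\xi)$, so that $\mathbb{E}[\max(Z(s),Z(s+h))]$ becomes an affine function of $\theta(h)^{\xi}$ with coefficient $\Gamma(1-\xi)$. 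Solving this relation for $\theta(h)$ and re-expressing the $1/\xi$-power through the standardizing map $u_{\beta}$ gives the first branch of (\ref{opor}); the Gumbel case $\xi=0$ follows by letting $\xi\to 0$ (equivalently by using the Gumbel mean), which collapses the power structure into the exponential of the second branch.

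The hard part will be the computation and inversion in the third step: tracking how the location and scale of $G_{\mu,\sigma,\xi}^{\theta(h)}$ depend on $\theta(h)$, carrying out the GEV mean integral so that exactly the factor $\Gamma(1-\xi)$ survives, and then isolating $\theta(h)$ in closed form through $u_{\beta}$. Care is also needed with the integrability restriction $\xi<1$, beyond which the expected maximum diverges, and with the interchange of limit and integral in the passage $\xi\to 0$ that yields the Gumbel branch.
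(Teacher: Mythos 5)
Your proposal is correct in substance, but it takes a genuinely different and more self-contained route than the paper. The paper's proof starts by quoting, as known, the madogram--extremal-coefficient relation (the Cooley et al.\ formula): $\theta(h)$ equals the stated expression with the madogram $M(h)=\frac{1}{2}E|Z(x+h)-Z(x)|$ in place of the copula integral. It then derives $M(h)=E\left[\max(Z(x+h),Z(x))\right]-\mu$ from the pointwise identity $|a-b|=2\max(a,b)-a-b$ and stationarity, identifies $E\left[\max(Z(x+h),Z(x))\right]=\int_{0}^{1}F_{Z}^{-1}(u)\,dC_{h}(u,u)$, and substitutes. You share only that copula identification (which you reach through the observation that $\max(U,V)$ has distribution function $C_{h}(u,u)$, rather than through the madogram identity); everything else you derive from scratch: $C_{h}(u,u)=u^{\theta(h)}$ from max-stability (the paper's relation (6)), hence the pairwise maximum has law $F_{Z}^{\theta(h)}$, again GEV with scale $\sigma\theta(h)^{\xi}$, and the Gamma integral $\int_{0}^{\infty}t^{-\xi}e^{-\theta t}\,dt=\theta^{\xi-1}\Gamma(1-\xi)$ makes $E[\max]$ affine in $\theta(h)^{\xi}$, which you invert. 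Your route proves the very relation the paper merely imports, and it makes explicit where $\xi<1$ is needed (integrability of the expected maximum); the paper's route is shorter but rests entirely on the cited madogram formula.

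One piece of bookkeeping must be fixed for your inversion to land on the displayed formula. With $\mu$ the location parameter, as in your quantile $F_{Z}^{-1}(u)=\mu+\frac{\sigma}{\xi}\left[(-\log u)^{-\xi}-1\right]$, your affine relation reads $E[\max]=\mu+\frac{\sigma}{\xi}\left[\theta^{\xi}\Gamma(1-\xi)-1\right]$, and solving gives $\theta=\left\{\left[1+\xi\left(E[\max]-\mu\right)/\sigma\right]/\Gamma(1-\xi)\right\}^{1/\xi}$, which is not the shape of the bracket in the theorem. To match the statement you must instead subtract the mean $m=E[Z]=\mu+\frac{\sigma}{\xi}\left(\Gamma(1-\xi)-1\right)$, so that $\int_{0}^{1}F_{Z}^{-1}(u)\,dC_{h}(u,u)-m$ is exactly the madogram, and then write $\theta(h)=u_{\beta}\left(\mu+\frac{E[\max]-m}{\Gamma(1-\xi)}\right)$ with $\mu$ the location inside $u_{\beta}$. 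In other words, the two occurrences of $\mu$ in the theorem play different roles (mean in the numerator, location in $u_{\beta}$), and the theorem's product $u_{\beta}(z)\left[\cdots\right]$ with a free $z$ is really the composition $u_{\beta}(\cdots)$. The paper's proof commits the same conflation, so this is not a defect peculiar to your approach, but your final step needs this distinction spelled out to reproduce the result as stated.
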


\begin{proof}
Let Z be a stationary random field of the second order of form parameter $\xi< 1$. The extremal
coefficient is given using the underlying madogram by:

\[
\theta (h)=\left\{ 
\begin{array}{cc}
u_{\beta }(z)\left[ \mu +\frac{M(h)}{\Gamma (1-\xi )}\right] & if~\xi \neq 0
\\ 
&  \\ 
\exp \left\{ \frac{M(h)}{\sigma }\right\} & if~\xi =0%
\end{array}%
\ \right. .
\]%
where $M_{h}$ is the semi-variogram given by: 
\begin{equation}
M(h)=\frac{E(|Z(x+h)-Z(x)|)}{2}.  \label{opo}
\end{equation}%
So, for all, x$\in 
\mathbb{R}
^{2}$ and by taking into account the fact that 
\[
|Z(x+h)-Z(x)|=2\max [Z(x+h),Z(x)]-Z(x+h)-Z(x)
\]%
the relation (3.2) provides: 
\[
M(h)=\frac{E\left( 2\max [Z(x+h),Z(x)]-Z(x+h)-Z(x)\right) }{2}.
\]%
So, it follows that: 
\[
M(h)=E\left( \max [Z(x+h),Z(x)]\right) -\frac{1}{2}\left[E(Z(x+h))+E(Z(x))\right]
\]%
Then, for a stricly continous context, 
\begin{equation}
M(h)=E\left( \max [Z(x+h),Z(x)]\right) -\mu ,  \label{o'}
\end{equation}%
where $\mu =E(Z(x+h))=E(Z(x)),$ is the means of $Z(.)$ is stationary in the
second order. 
\[
E\left( \max [Z(x+h),Z(x)]\right) =\int_{-\infty }^{+\infty }zdC_{h}\left(
F_{Z}(z),F_{Z}(z)\right) .
\]%
Which gives 
\begin{equation}
E\left( \max [Z(x+h),Z(x)]\right) =\int_{0}^{1}F_{Z}^{-1}(u)dC_{h}(u,u).
\label{t}
\end{equation}%
Then, using the formula (\ref{t}) in (\ref{o'}), one obtain 
\begin{equation}
M(h)=\int_{0}^{1}F_{Z}^{-1}(u)dC_{h}(u,u)-\mu .  \label{oo'}
\end{equation}%
So by using the relation (\ref{oo'}) in the expression of the coefficient
extremal we get 
\[
\theta (h)=\left\{ 
\begin{array}{cc}
u_{\beta }(z)\left[ \mu +\frac{\int_{0}^{1}F_{Z}^{-1}(u)dC_{h}(u,u)-\mu }{%
\Gamma (1-\xi )}\right] & if~\xi \neq 0 \\ 
&  \\ 
\exp \left\{ \frac{\int_{0}^{1}F_{Z}^{-1}(u)dC_{h}(u,u)-\mu }{\sigma }%
\right\} & if~\xi =0%
\end{array}%
\ \right. .
\]%
Finaly, it yields the relation (\ref{opor}) as disserted.\\
\end{proof}

Let $Z$ be a max-stable random field. The extremal coefficient and the
copula function are related differently depending on the marginal
distribution of the Z process. \newline

\begin{proposition}
Let Z be a spatial domaine distributed according a stationary max-stable model
G of with either or Gumbel or Weibull univariate marginal then, the extremal
coefficient is given by:%
\begin{equation}
\theta (h)=\left\{ 
\begin{array}{cc}
\frac{1}{1-G\left( F_{Z},C_{h},u\right)+\mu } & of~standard~Weibull, \\ 
&  \\ 
\exp \left( G\left( F_{Z},C_{h},u\right) -\mu \right) & ~of~standard~Gumbel%
\end{array}%
\ \right. .  \label{tafo}
\end{equation}%
where $G\left( F_{Z},C_{h},u\right) =\int_{0}^{1}F_{Z}^{-1}(u)dC_{h}(u,u).$
\end{proposition}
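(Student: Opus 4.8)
The plan is to mirror the structure of the preceding theorem, exploiting the fact that the copula representation of $E(\max[Z(x+h),Z(x)])$ is insensitive to the choice of univariate marginal: it depends only on the bivariate copula $C_{h}$ of the pair $(Z(x+h),Z(x))$ and on the common quantile function $F_{Z}^{-1}$. Consequently the madogram computation carried out in the previous proof transfers verbatim to the Weibull and Gumbel cases, and only the family-specific link between the extremal coefficient $\theta(h)$ and the madogram $M(h)$ must be supplied separately for each marginal type.

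First I would record, exactly as in the preceding theorem, the two standardized relations tying the extremal coefficient to $M(h)$: for a standard Gumbel margin ($\xi=0$, $\sigma=1$) one has $\theta(h)=\exp\{M(h)\}$, while for a standard Weibull margin ($\xi<0$) one has $\theta(h)=1/(1-M(h))$. These are obtained by specializing the defining identity $P[F_{1}(x_{1})\le p,\dots,F_{n}(x_{n})\le p]=p^{\theta}$ to the stationary bivariate pair and combining it with the corresponding GEV parametrization in (\ref{itp}); they are the Gumbel and Weibull analogues of the $\xi\neq0$ formula used for the Fréchet family.

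Next I would invoke the marginal-free madogram identity. Writing $|Z(x+h)-Z(x)|=2\max[Z(x+h),Z(x)]-Z(x+h)-Z(x)$ and using second-order stationarity ($\mu=E(Z(x+h))=E(Z(x))$) gives $M(h)=E(\max[Z(x+h),Z(x)])-\mu$, and the change of variables through the copula yields
\[
E\left(\max[Z(x+h),Z(x)]\right)=\int_{0}^{1}F_{Z}^{-1}(u)\,dC_{h}(u,u)=G\left(F_{Z},C_{h},u\right),
\]
so that $M(h)=G(F_{Z},C_{h},u)-\mu$. This step is identical to the one in the preceding theorem and uses nothing about the marginal family.

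Finally I would substitute $M(h)=G(F_{Z},C_{h},u)-\mu$ into the two relations of the first step, obtaining $\theta(h)=\exp\{G(F_{Z},C_{h},u)-\mu\}$ in the Gumbel case and $\theta(h)=1/\{1-(G(F_{Z},C_{h},u)-\mu)\}=1/\{1-G(F_{Z},C_{h},u)+\mu\}$ in the Weibull case, which is precisely (\ref{tafo}). The main obstacle is not the substitution but the justification of the Weibull relation $\theta(h)=1/(1-M(h))$: because the negative-shape law has bounded upper support, one must treat the upper endpoint carefully and verify the sign conventions so that the reciprocal form emerges, whereas the Gumbel relation reduces to a routine exponential link.
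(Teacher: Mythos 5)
Your proposal matches the paper's proof in essence: both take the family-specific links $\theta(h)=1/(1-M(h))$ (Weibull) and $\theta(h)=\exp\{M(h)\}$ (Gumbel) as given, and substitute the copula representation $M(h)=\int_{0}^{1}F_{Z}^{-1}(u)\,dC_{h}(u,u)-\mu$ already established in the preceding theorem. The only difference is cosmetic: you re-derive the madogram identity and flag the justification of the Weibull link, whereas the paper simply cites both links as well known (referring to the Gumbel case via the Brown--Resnick model) and performs the same substitution.
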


\begin{proof}
Dealing with the case where the margins of Z are distributed according the
Weibull model, it is well known that the extremal coefficient and the
madogram are associated by the relation $\theta (h)=\frac{1}{1-M(h)}.$ So,
using (\ref{oo'}) in this relation, it comes, under the existence, that $%
\theta (h)=\frac{1}{1-\int_{0}^{1}F_{Z}^{-1}(u)dC_{h}(u,u)+\mu }.$ Hence
the first result of (\ref{tafo}). \newline
\medskip

Similarly, if the margins of Z are Brown-Resnick model (see [12]), then $%
\theta (h)=\exp (M(h))$. So, using (\ref{oo'}) in this relationship, it
comes back that 
\[
\theta (h)=\exp \left( \int_{0}^{1}F_{Z}^{-1}(u)dC_{h}(u,u)-\mu \right) .
\]%
Hence the last result of (\ref{tafo})
\end{proof}

\begin{figure}[hbtp!!]
\includegraphics[scale=0.5]{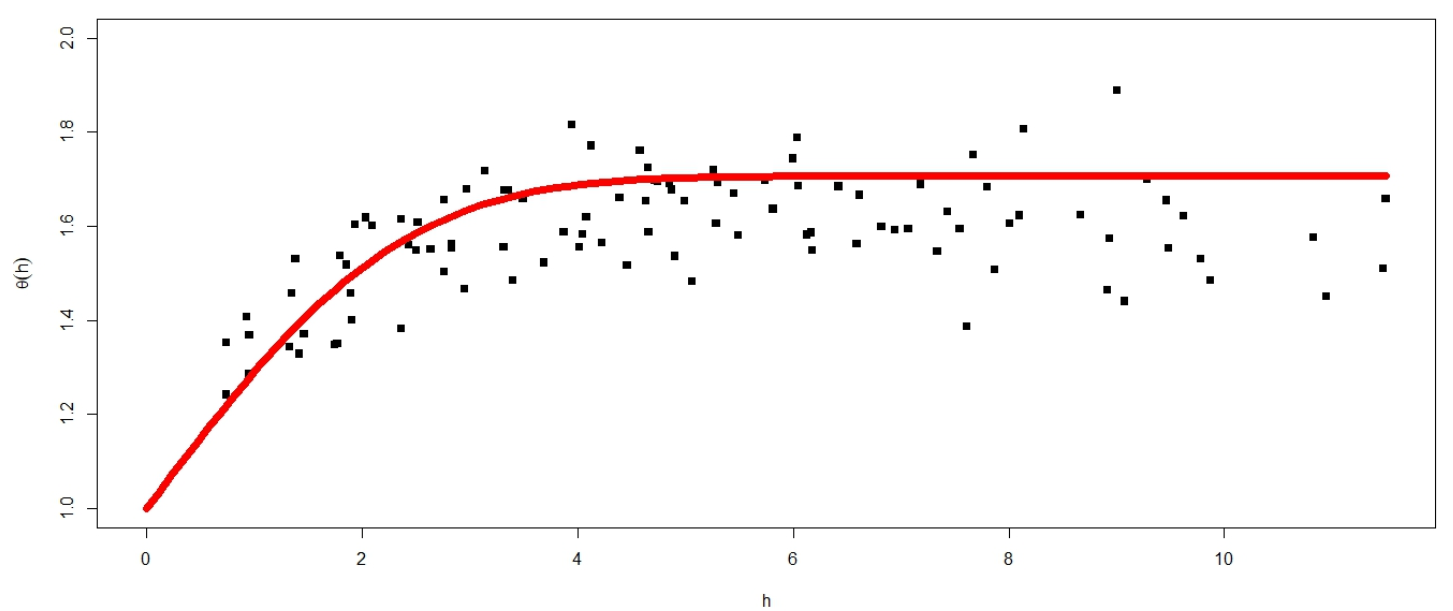} 
\caption{Graph of Extremal coefficient for a Brown-Resnick process.\protect
\ref{fig2}.}
\label{fig2}
\end{figure}

The following section allows us to construct a model of the extremogram and
the cross-extremogram via copula function to determine the distributional
dependence of the random variables of the random field Z depending on the
inter-site distance.

\subsection{Sampling extremogram with Copulas}

In this subsection, we model the extremogram function using a copula
function for all $A\subset \mathbb{R}_{\ast }^{+}$ and $a\in A$. We obtain
the following result see(\cite{martin};~\cite{muneya};~\cite{davis}).

\begin{theorem}
Consider $F_{i,Z}$ the distribution function of the random variable $Z_{i}$
and $U_{i}$ the uniform transformation of $F_{i,Z}$. Then, a copula-based
extremogram is given, for all $x_{i},x_{j}\in \mathbb{R}^{d}$, by: 
\[
\rho _{AA}(h_{ij})=\rho _{(a,+\infty )}(h_{ij})=2-\lim_{u\rightarrow 1^{-}}%
\frac{1-C_{h_{ij}}(u,u)}{1-u},
\]%
where $h_{ij}$ is the separating distance between $x_{i},x_{j}.$
\end{theorem}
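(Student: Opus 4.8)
The plan is to reduce the copula-based extremogram for the diagonal set $A=(a,+\infty)$ to a tail-conditional exceedance probability, transport it to the uniform scale through the probability-integral transform, and then rewrite everything in terms of the copula $C_{h_{ij}}$. First I would recall the defining form of the extremogram for $A=B=(a,+\infty)$: it is the limiting conditional probability that the field at one site exceeds a high threshold given that the field at the other site does. Writing $U_i=F_{i,Z}(Z_i)$ and $U_j=F_{j,Z}(Z_j)$, which are standard uniform by continuity of the margins (as recalled in the Introduction), this reads
\[
\rho_{AA}(h_{ij})=\lim_{u\to 1^-}P\bigl(U_j>u \mid U_i>u\bigr).
\]
Passing to uniform margins is precisely what forces the separating distance $h_{ij}$ to enter only through the copula of the pair.

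Second, I would expand the joint exceedance probability by inclusion--exclusion. Since $U_i,U_j\sim U(0,1)$,
\[
P(U_i>u,\,U_j>u)=1-P(U_i\le u)-P(U_j\le u)+P(U_i\le u,\,U_j\le u)=1-2u+C_{h_{ij}}(u,u),
\]
while $P(U_i>u)=1-u$. Substituting into the conditional probability gives
\[
\rho_{AA}(h_{ij})=\lim_{u\to 1^-}\frac{1-2u+C_{h_{ij}}(u,u)}{1-u}.
\]

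Third, a one-line rearrangement finishes the argument:
\[
\frac{1-2u+C_{h_{ij}}(u,u)}{1-u}=\frac{2(1-u)-\bigl(1-C_{h_{ij}}(u,u)\bigr)}{1-u}=2-\frac{1-C_{h_{ij}}(u,u)}{1-u},
\]
so taking $u\to 1^-$ yields the stated formula. The inclusion--exclusion identity and this final rearrangement are routine; the only point needing care, and hence the main obstacle, is justifying the first reduction, namely that for the diagonal set the normalized extremogram limit genuinely coincides with the tail-conditional probability written on the uniform scale. One must verify that the normalizing sequence defining the extremogram is absorbed into the threshold $u\to 1^-$ after the probability-integral transform, and that the limit exists --- equivalently, that $C_{h_{ij}}$ admits a finite upper tail-dependence coefficient. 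For a stationary max-stable field this is guaranteed by the max-stability of the spatial copula established in the preceding theorem, so the diagonal limit is well defined and the reduction is legitimate.
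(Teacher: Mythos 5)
Your proposal is correct and follows essentially the same route as the paper's own proof: both reduce the extremogram to a conditional exceedance probability, pass to uniform margins via the probability-integral transform (the paper does this by setting $F_{i}(az)=F_{j}(az)=u$ so that $z\to\infty$ becomes $u\to 1^{-}$), apply the survival-copula identity $P(U_{j}>u,U_{i}>u)=1-2u+C_{h_{ij}}(u,u)$, and finish with the same algebraic rearrangement. Your closing remark on justifying the absorption of the normalizing sequence into the threshold is a point the paper glosses over with its stationarity assumption, so it is a welcome (if brief) addition rather than a divergence.
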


\begin{proof}
It is well known that $\rho_{AA}(h_{ij})=\lim\limits_{z\rightarrow
\infty}P\left(\frac{Z(x_{j})}{z}\in A/\frac{Z(x_{i})}{z}\in A\right)$. Such
as: $A=(a,+\infty)$, this expression can be written as, 
\[
\rho_{AA}(h_{ij}) = \lim\limits_{z\rightarrow +\infty}\frac{P\left(\frac{%
Z(x_{j})}{z}> a,\frac{Z(x_{i})}{z}>a\right)}{P(\frac{Z(x_{i})}{z}>a)}.
\]
Then, it is easy to show that, 
\[
\rho_{AA}(h_{ij}) = \lim\limits_{z\rightarrow +\infty}\frac{%
P\left(Z(x_{j})> az,Z(x_{i})>az\right)}{P(Z(x_{i})>az)} ,
\]
Z being a stationary random field. Under the assumption that $%
F_{i}(az)=F_{j}(az)=u$. \medskip

Then, it follows that, 
\[
\rho _{AA}(h_{ij})=\lim\limits_{u\rightarrow 1^{-}}\frac{P(U_{j}>u,U_{i}>u)%
}{P(U_{i}>u)}.
\]%
Nevertheless, using the survival copula, when have: 
\[
P(U_{j}>u,U_{i}>u)=1-u-u+C_{h_{ij}}(u,u).
\]%
Therefore, 
\[
\rho _{AA}(h_{ij})=\lim\limits_{u\rightarrow 1^{-}}\frac{%
1-2u+C_{h_{ij}}(u,u)}{1-u}.
\]%
Then, based on a result of Cooley \& al. [$7$], it follows that: 
\[
\rho _{AA}(h_{ij})=\lim\limits_{u\rightarrow 1^{-}}\left[ 2-\frac{%
1-C_{h_{ij}}(u,u)}{1-u}\right] .
\]%
So, as disserted 
\[
\rho _{AA}(h_{ij})=\rho _{(a,+\infty )}(h)=2-\lim_{u\rightarrow 1^{-}}\frac{%
1-C_{h_{ij}}(u,u)}{1-u}.
\]%
In the particular case where $a=1$, that is $F_{i}(az)=F_{i}(z)$ the
extremogram merges with the upper tail dependence measure. So, 
\[
\rho _{AA}(h_{ij})=\lim_{u\rightarrow 1^{-}}2-\frac{1-C_{h_{ij}}(u,u)}{1-u}%
=2-\lim_{u\rightarrow 1^{-}}\frac{1-C_{h_{ij}}(u,u)}{1-u}=\chi (h_{ij}).
\]%
For the particular case where $a=1$. Moreover If $\rho _{(1,+\infty
)}(h_{ij})=0$, then the random variables $Z_{i}$ and $Z_{j}$ are
asymptotically independent \medskip 
\end{proof}

In a second case, considering that $A\subset \mathbb{R}_{\ast }^{-}$ and $%
a\in A$, we obtain next relation of the extremogram via the underlying
copula. In particular, if $\chi _{h}$\ is reduced to a single site $x$, the
law of $Y^{\ast }$\ is either the Frechet distribution, the Gumbel or the
Weibull distribution. \medskip

The following result provides a copula-based extension of the extremogram of
the process.

\begin{proposition}
The extremogram $\rho_{AA}$ and the copula function $C_{h_{ij}}$ are linked by the relation: 
\begin{equation}
\rho _{AA}(h_{ij})=\lim_{u\rightarrow 0^{+}}\frac{C_{h_{ij}}(u,u)}{u},~u\in[
0,1].  \label{oo}
\end{equation}
\end{proposition}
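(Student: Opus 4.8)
The plan is to mirror the argument of the preceding theorem but to work on the lower tail rather than the upper one, which is exactly what the choice $A \subset \mathbb{R}_{\ast}^{-}$ selects. First I would recall the defining form of the extremogram as a conditional probability,
\[
\rho_{AA}(h_{ij}) = \lim_{z \to +\infty} P\left( \frac{Z(x_j)}{z} \in A \,\Big|\, \frac{Z(x_i)}{z} \in A \right),
\]
and specialize to $A = (-\infty, a)$ with $a < 0$. Writing the conditional probability as a ratio and clearing the scaling $z$ gives
\[
\rho_{AA}(h_{ij}) = \lim_{z \to +\infty} \frac{P(Z(x_j) < az,\, Z(x_i) < az)}{P(Z(x_i) < az)}.
\]

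Next I would exploit stationarity, so that $Z(x_i)$ and $Z(x_j)$ share a common marginal $F_Z$, and perform the change of variable $u = F_Z(az)$. Because $a < 0$ and $z \to +\infty$ we have $az \to -\infty$, hence $u \to 0^+$; this sign bookkeeping is precisely what turns the upper-tail limit $u \to 1^-$ of the previous theorem into a lower-tail limit. In the uniform scale the conditioning event becomes $\{U_i \le u\}$ and the joint event becomes $\{U_i \le u,\, U_j \le u\}$, so that
\[
\rho_{AA}(h_{ij}) = \lim_{u \to 0^+} \frac{P(U_i \le u,\, U_j \le u)}{P(U_i \le u)}.
\]

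Then the copula enters directly. Unlike the upper-tail case, no survival copula or inclusion--exclusion step is needed here: the lower-left orthant probability is by definition the copula value, $P(U_i \le u,\, U_j \le u) = C_{h_{ij}}(u,u)$, while $P(U_i \le u) = u$ since $U_i$ is standard uniform. Substituting yields
\[
\rho_{AA}(h_{ij}) = \lim_{u \to 0^+} \frac{C_{h_{ij}}(u,u)}{u},
\]
which is the claimed relation and is exactly the lower tail-dependence coefficient.

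I expect the only genuinely delicate point to be the limit bookkeeping in the change of variable: one must verify that the substitution $u = F_Z(az)$ with $a < 0$ and $z \uparrow +\infty$ really drives $u \downarrow 0^+$, which hinges on $F_Z$ being a continuous distribution function whose support is unbounded below on the relevant range (as for the Gumbel or reversed-Weibull margins), and that the limit may be exchanged with the conditional-probability ratio. Everything else is a direct transcription of the copula identities already used in the preceding theorem.
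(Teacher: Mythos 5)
Your proof is correct and follows essentially the same route as the paper's: write the extremogram as a conditional probability over the lower-tail set $A$, pass to the uniform scale via $u = F_Z(az)$, and identify the lower orthant probability with $C_{h_{ij}}(u,u)$ and the marginal with $u$. The only difference is bookkeeping --- you send $z \to +\infty$ with $a<0$ so that $az \to -\infty$, whereas the paper sends $z \to -\infty$; your version actually handles the signs more consistently than the paper's, but the argument is the same.
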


\begin{proof}
It is well known that $\rho _{AA}(h_{ij})=\lim\limits_{z\rightarrow -\infty
}P\left( \frac{Z(x_{j})}{z}\in A/\frac{Z(x_{i})}{z}\in A\right) $. \newline
Since $A=(-\infty ,a)$, it follows that: 
\[
\rho _{AA}(h_{ij})=\lim\limits_{z\rightarrow -\infty }P\left( \frac{Z(x_{j})%
}{z}\leq a\big/\frac{Z(x_{i})}{z}\leq a\right) .
\]%
Then, 
\[
\rho _{AA}(h_{ij})=\lim\limits_{z\rightarrow -\infty }P\left( Z(x_{j})\leq
az/Z(x_{i})\leq az\right) .
\]%
Thus, 
\[
\rho _{AA}(h_{ij})=\lim\limits_{u\rightarrow 0^{+}}P(U_{j}\leq u/U_{i}\leq
u).
\]%
Therefore, 
\[
\rho _{AA}(h_{ij})=\lim\limits_{u\rightarrow 0^{+}}\frac{P(U_{j}\leq
u,U_{i}\leq u)}{P(U_{i}\leq u)}.
\]
Hence the result (\ref{oo}) as disserted. \\
\end{proof}
\begin{figure}[hbtp!!]
\caption{Graph of theoretical Extremogram .}
\label{fig1}
\includegraphics[scale=0.5]{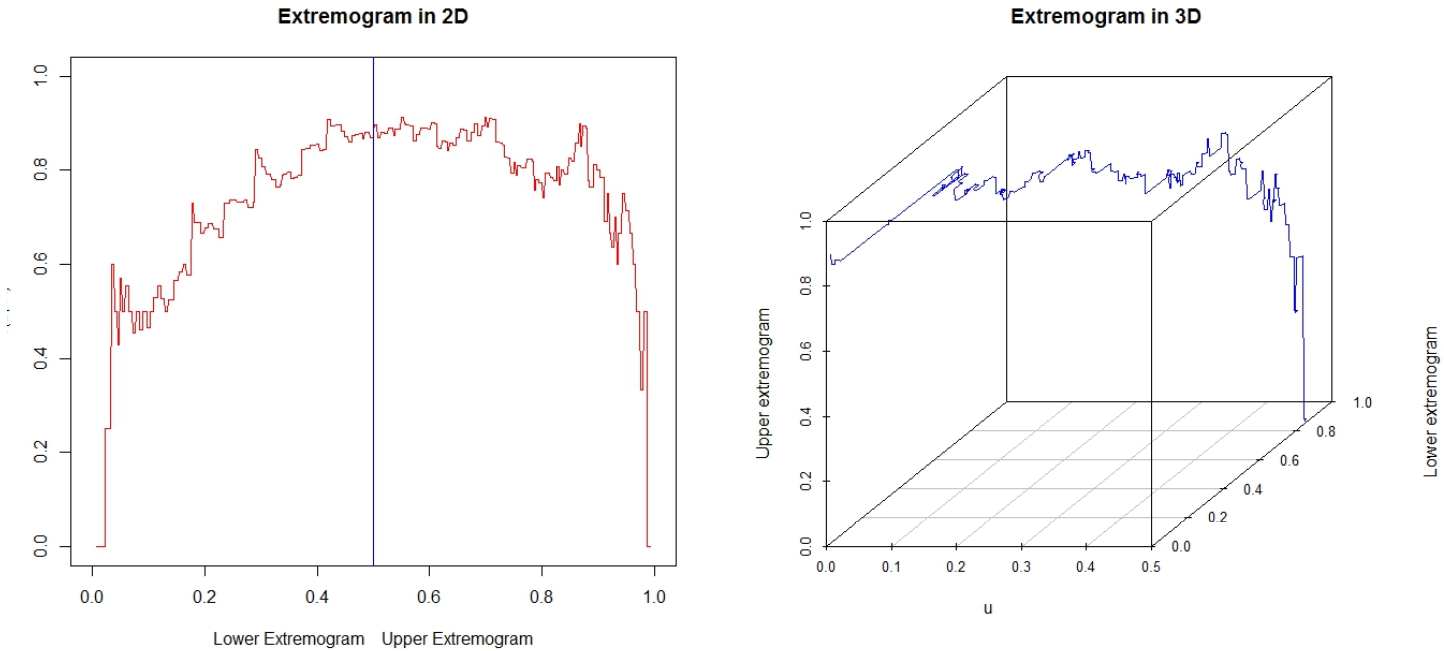}\\
{\small {\
This figures gives the representation in $2D$ and $3D$ for $A=(-\infty,1)$
and $A=(1,+\infty)$. We denote by upper extremogram for $A=(1,+\infty)$ and
lower extremogram for $A=(-\infty,1).$} above the figure.}
\end{figure}

The following subsection gives a relation between the cross-extremogram and
the copula function.

\subsection{Cross-extremogram sampling with Spatial copulas}

\indent The following result provides a characterization of the cross
extremogram in a copula contex, for two given sites $s_{i}$ and $s_{j}$.

\begin{theorem}
For two given sites $s_{i}$ and $s_{j}$ separated by $h_{ij}$, the extremal
coefficient it given by: 
\begin{equation}
\rho _{AB}(h_{ij})=1-\lim_{(u_{1i},u_{2j})\rightarrow (1^{-},1^{-})}\frac{u_{2j}-C_{h_{ij}}(u_{1i},u_{2j})}{1-u_{1i}}.  \label{ismo}
\end{equation}
\end{theorem}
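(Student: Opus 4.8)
The plan is to mirror the structure of the preceding extremogram computation, but to keep two distinct high thresholds at the two sites so that the conditioning set $A$ and the target set $B$ need not coincide. First I would start from the conditional-probability definition of the cross-extremogram,
\[
\rho_{AB}(h_{ij})=\lim_{z\rightarrow+\infty}P\left(\frac{Z(x_{j})}{z}\in B\Big/\frac{Z(x_{i})}{z}\in A\right),
\]
and specialize to the upper-tail sets $A=(a,+\infty)$ and $B=(b,+\infty)$. Rewriting the conditional probability as a ratio and clearing the factor $z>0$, this becomes
\[
\rho_{AB}(h_{ij})=\lim_{z\rightarrow+\infty}\frac{P\left(Z(x_{j})>bz,\,Z(x_{i})>az\right)}{P\left(Z(x_{i})>az\right)}.
\]

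Next I would apply the probability integral transform at each site, setting $U_{i}=F_{i,Z}(Z(x_{i}))$ and $U_{j}=F_{j,Z}(Z(x_{j}))$, and introduce the threshold levels $u_{1i}=F_{i,Z}(az)$ and $u_{2j}=F_{j,Z}(bz)$. Since the margins are continuous and unbounded above, both levels tend to $1^{-}$ as $z\rightarrow+\infty$, so the single limit over $z$ becomes the joint limit $(u_{1i},u_{2j})\rightarrow(1^{-},1^{-})$. Writing the numerator as a joint survival probability and invoking the survival-copula identity exactly as in the extremogram proof, I would use
\[
P(U_{i}>u_{1i},U_{j}>u_{2j})=1-u_{1i}-u_{2j}+C_{h_{ij}}(u_{1i},u_{2j}),\quad P(U_{i}>u_{1i})=1-u_{1i},
\]
so that the ratio reads $\frac{1-u_{1i}-u_{2j}+C_{h_{ij}}(u_{1i},u_{2j})}{1-u_{1i}}$.

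The final step is purely algebraic: splitting the numerator as $(1-u_{1i})-(u_{2j}-C_{h_{ij}}(u_{1i},u_{2j}))$ and dividing termwise yields
\[
\rho_{AB}(h_{ij})=1-\lim_{(u_{1i},u_{2j})\rightarrow(1^{-},1^{-})}\frac{u_{2j}-C_{h_{ij}}(u_{1i},u_{2j})}{1-u_{1i}},
\]
which is precisely the claimed relation (\ref{ismo}). I expect the only genuine obstacle to be the justification that the $z$-limit may be replaced by the two-parameter limit $(u_{1i},u_{2j})\rightarrow(1^{-},1^{-})$: this needs the stationarity of $Z$ together with continuity and strict monotonicity of the margins near their upper endpoints, so that $z\mapsto(F_{i,Z}(az),F_{j,Z}(bz))$ is a genuine reparametrization of the limit. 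The survival-copula rewriting and the termwise division are then routine.
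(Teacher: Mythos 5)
Your proposal is correct and follows essentially the same route as the paper's own proof: the conditional-probability definition with $A=(a,+\infty)$, $B=(b,+\infty)$, the survival-copula identity $P(U_{i}>u_{1i},U_{j}>u_{2j})=1-u_{1i}-u_{2j}+C_{h_{ij}}(u_{1i},u_{2j})$, the substitution $u_{1i}=F_{1i}(az)$, $u_{2j}=F_{2j}(bz)$, and the final termwise split. You are in fact slightly more careful than the paper, since you explicitly flag the replacement of the single $z$-limit by the joint limit $(u_{1i},u_{2j})\rightarrow(1^{-},1^{-})$ as the step needing justification, a point the paper passes over silently.
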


\begin{proof}
Let $F_{ji}(Z(x_{i}))=U_{ji}$ be the univariate distribution functions
obtained by integral transforms to the variables $Z_{j}(x_{i})$ with $%
x_{j}-x_{i}=h_{ij},1\leq i,j\leq n,i\neq j$. It is well known that 
\[
\rho _{AB}(h_{ij})=\lim_{z\rightarrow \infty }P\left( Z_{2}(s_{j})\in
zB/Z_{1}(s_{i})\in zA\right) .
\]%
Since $A=(a,\infty )$ and $B=(b,\infty )$, it follows that 
\[
\rho _{AB}(h_{ij})=\lim_{z\rightarrow \infty }P\left( Z_{2}(s_{j})\in
zB/Z_{1}(s_{i})\in zA\right) .
\]%
It follow that, 
\[
\rho _{AB}(h_{ij})=\lim_{z\rightarrow \infty }P\left( Z_{2}(s_{j})\geq
bz/Z_{1}(s_{i})\geq az\right) .
\]%
So, 
\[
\rho _{AB}(h_{ij})=\lim_{z\rightarrow \infty }\frac{P\left(
Z_{2}(s_{j})\geq bz,Z_{1}(s_{i})\geq az\right) }{P\left( Z_{1}(s_{i})\geq
az\right) }.
\]%
\[
\rho _{AB}(h_{ij})=\lim_{z\rightarrow \infty }\frac{\hat{H}_{h_{ij}}(bz,az)%
}{P\left( Z_{1}(s_{i})\geq az\right) }
\]%
with $\hat{H}_{h_{ij}}(bz,az)$ the survival function of the variables $%
Z_{2}(s_{j})$ and $Z_{1}(s_{i}).$\newline
\newline
Moreover, if $C_{h_{ij}}$ is the jointed copula underlying the distribution
of $Z_{1}(s_{i})$ and $Z_{2}(s_{j})$, then, it follows that: 
\[
\hat{H}%
_{h_{ij}}(bz,az)=1-F_{1i}(az)-F_{2j}(bz)+H_{h_{ij}}(bz,az)=1-F_{1i}(az)-F_{2j}(bz)+C_{h_{ij}}(F_{2j}(bz),F_{1i}(az)),
\]%
\indent Likewise 
\[
P\left( Z_{1}(s_{i})\geq az\right) =1-P\left( Z_{1}(s_{i})<az\right)
=1-F_{1i}(az).
\]%
By replacing these two last relations in (\ref{o'}), we obtain the following
result: 
\begin{equation}
\rho _{AB}(h_{ij})=\lim_{z\rightarrow \infty }\frac{%
1-F_{1i}(az)-F_{2j}(bz)+C_{h_{ij}}(F_{1i}(az),F_{2j}(bz))}{1-F_{1i}(az)}.
\label{pif}
\end{equation}%
Let us consider $u_{1i}=F_{1i}(az)$ et $u_{2j}=F_{2j}(bz)$. When $%
z\longrightarrow +\infty $ then $u_{1i}\longrightarrow 1^{-}$ and $%
u_{2j}\longrightarrow 1^{-}.$ By using these transformations in the relation
(\ref{pif}), it follows that: 
\[
\rho _{AB}(h_{ij})=\lim_{(u_{1i},u_{2j})\rightarrow (1^{-},1^{-})}\frac{%
1-u_{1i}-u_{2j}+C_{h_{ij}}(u_{1i},u_{2j})}{1-u_{1i}}.
\]%
So \[
\rho _{AB}(h_{ij})=1-\lim_{(u_{1i},u_{2j})\rightarrow (1^{-},1^{-})}\frac{u_{2j}-C_{h_{ij}}(u_{1i},u_{2j})}{1-u_{1i}}.
\]%
Hence the result (\ref{ismo}) as disserted
\end{proof}

The following results provides an asymptotic statement.

\begin{proposition}
Consider $F_{ij}(az)=u_{ij}\in \lbrack 0,1]$ the distribution function of
the variable $Z_{j}(x_{i})$. If $z\longmapsto +\infty $, then $%
u_{ij}\longmapsto 1^{-}$ The relation (\ref{tifa}) is written according to
the copula by the relation: 
\begin{equation}
\left( 
\begin{array}{c}
\rho _{AA}^{11}(h_{ij}) \\ 
\\ 
\\ 
\rho _{BB}^{22}(h_{ij}) \\ 
\\ 
\\ 
\rho _{AB}^{12}(h_{ij}) \\ 
\\ 
\\ 
\rho _{BA}^{21}(h_{ij})%
\end{array}%
\right) =\left( 
\begin{array}{c}
1-\displaystyle\lim_{u_{11}\rightarrow 1^{-}}\frac{u_{11}-C_{h_{ij}}\left(
u_{11},u_{11}\right) }{1-u_{11}} \\ 
\\ 
1-\displaystyle\lim_{u_{22}\rightarrow 1^{-}}\frac{u_{22}-C_{h_{ij}}\left(
u_{22},u_{22}\right) }{1-u_{22}} \\ 
\\ 
1-\displaystyle\lim_{(u_{11},u_{22})\rightarrow (1^{-},1^{-})}\frac{u_{22}-C_{h_{ij}}\left( u_{22},u_{11}\right) }{1-u_{11}} \\ 
\\ 
1-\displaystyle\lim_{(u_{11},u_{22})\rightarrow (1^{-},1^{-})}\frac{u_{11}-C_{h_{ij}}\left( u_{11},u_{22}\right) }{1-u_{22}}%
\end{array}%
\right)  \label{papito}
\end{equation}
\end{proposition}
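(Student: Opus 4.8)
The plan is to specialize the two preceding theorems to four index/set configurations and stack the results into the stated column vector. The diagonal entries $\rho_{AA}^{11}$ and $\rho_{BB}^{22}$ are single-process upper extremograms, obtained by applying the first extremogram theorem to the component processes $Z_1$ and $Z_2$ respectively; the off-diagonal entries $\rho_{AB}^{12}$ and $\rho_{BA}^{21}$ are genuine cross-extremograms furnished by the cross-extremogram theorem, evaluated in the two opposite conditioning directions.

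First I would write each entry as its defining conditional-probability limit $\rho_{XY}^{kl}(h_{ij}) = \lim_{z\to\infty} P\bigl(Z_l(s_j)\in zY \mid Z_k(s_i)\in zX\bigr)$ and rewrite it as the ratio of a joint survival probability to a marginal survival probability, exactly as in the proof of the cross-extremogram theorem. Applying the survival-copula identity $\hat H_{h_{ij}} = 1 - u - u' + C_{h_{ij}}(\cdot,\cdot)$ together with the uniform transforms $u_{kl}=F_{kl}(az)$, each entry collapses to a limit of the form $1 - \lim \frac{\bullet - C_{h_{ij}}(\bullet,\bullet)}{1-\bullet}$ as the relevant uniforms tend to $1^-$. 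For the two diagonal terms one checks, precisely as in the first extremogram theorem, that $1-\lim_{u\to1^-}\frac{u-C_{h_{ij}}(u,u)}{1-u}$ coincides with $2-\lim_{u\to1^-}\frac{1-C_{h_{ij}}(u,u)}{1-u}$, so these reduce to the displayed single-process expressions with the appropriate uniform $u_{11}$ or $u_{22}$.

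The only delicate point, and the main obstacle, is the bookkeeping of the conditioning variable and the argument order in the (not necessarily exchangeable) copula $C_{h_{ij}}$. For $\rho_{AB}^{12}$ the conditioning is on process $1$, so the denominator must carry $1-u_{11}$ while $C_{h_{ij}}$ is evaluated at $(u_{22},u_{11})$; for $\rho_{BA}^{21}$ the roles reverse, giving denominator $1-u_{22}$ and copula argument $(u_{11},u_{22})$. Respecting this asymmetry throughout is exactly what keeps $\rho_{AB}^{12}$ and $\rho_{BA}^{21}$ from being conflated. Once the four conditional limits are computed with the correct conditioning variable in each denominator, assembling them as a column vector yields the claimed identity (\ref{papito}).
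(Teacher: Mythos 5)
Your proposal is correct and follows essentially the same route as the paper: the paper likewise starts from the conditional-probability matrix (\ref{tifa}) and specializes the cross-extremogram relation (\ref{ismo}) (itself obtained via the survival-copula identity and the uniform transforms) to the four configurations, with the same bookkeeping of the conditioning uniform in each denominator and the copula argument order. The only cosmetic difference is that you invoke the single-process extremogram theorem for the diagonal entries and note the algebraic identity $1-\lim_{u\rightarrow 1^{-}}\frac{u-C_{h_{ij}}(u,u)}{1-u}=2-\lim_{u\rightarrow 1^{-}}\frac{1-C_{h_{ij}}(u,u)}{1-u}$, whereas the paper simply takes $B=A$ and $u_{1i}=u_{1j}=u_{11}$ in (\ref{ismo}); these are the same computation.
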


\begin{proof}
In matrix form, the extremogram and the crossed extremogram can be written,
(see Muneya et al. [$18$]), for all $x,x+h\in \mathbb{R}^{d}$, such as, 
\begin{equation}
\left( 
\begin{array}{c}
\rho _{AA}^{11}(h) \\ 
\\ 
\rho _{BB}^{22}(h) \\ 
\\ 
\rho _{AB}^{12}(h) \\ 
\\ 
\rho _{BA}^{21}(h)%
\end{array}%
\right) =\lim_{z\rightarrow +\infty }\left( 
\begin{array}{ccc}
P\left( Z_{1}(x+h)\in zA/Z_{1}(x)\in zA\right) &  &  \\ 
&  &  \\ 
P\left( Z_{2}(x+h)\in zB/Z_{2}(x)\in zB\right) &  &  \\ 
&  &  \\ 
P\left( Z_{2}(x+h)\in zB/Z_{1}(x)\in zA\right) &  &  \\ 
&  &  \\ 
P\left( Z_{1}(x+h)\in zA/Z_{2}(x)\in zB\right) &  & 
\end{array}%
\right)  \label{tifa}
\end{equation}%
Consider $B=A=(a,+\infty )$. Z being stationary, let $u_{1i}=u_{1j}=u_{11}$.
With these transformations the relation (\ref{ismo}) is written in the form, 
\[
\rho _{AA}^{11}(h_{ij})=\displaystyle\lim_{u_{11}\rightarrow 1^{-}}\frac{%
1-2u_{11}+C_{h_{ij}}\left( u_{11},u_{11}\right) }{1-u_{11}}.
\]%
Hence the first expression of (\ref{papito}).\newline
\newline
In the same way, let us consider that $A=B=(b,+\infty )$ and $%
u_{2j}=u_{2i}=u_{22}$, the relation (\ref{ismo}) is written in the form, 
\[
\rho _{BB}^{22}(h_{ij})=\displaystyle\lim_{u_{22}\rightarrow 1^{-}}\frac{%
1-2u_{22}+C_{h_{ij}}\left( u_{22},u_{22}\right) }{1-u_{22}}.
\]%
Hence the second expression of (\ref{papito}).\newline
\newline
Similarly for $A=(a,+\infty )$ and $B=(b,+\infty )$, let $%
u_{1i}=u_{1j}=u_{11}$ and $u_{2j}=u_{2i}=u_{22}$. The relation (\ref{ismo})
is written in the form, 
\[
\rho _{AB}^{12}(h_{ij})=\displaystyle\lim_{(u_{11},u_{22})\rightarrow
(1^{-},1^{-})}\frac{1-u_{22}-u_{11}+C_{h_{ij}}\left( u_{22},u_{11}\right) }{%
1-u_{11}}.
\]%
By swapping A and B, $u_{11}$ and $u_{22}$ will change location. So this new
relationship is still written in the form, 
\[
\rho _{BA}^{21}(h_{ij})=\displaystyle\lim_{(u_{22},u_{11})\rightarrow
(1^{-},1^{-})}\frac{1-u_{11}-u_{22}+C_{h_{ij}}\left( u_{11},u_{22}\right) }{%
1-u_{22}}.
\]%
Hence the third and fourth expressions of (\ref{papito}).
\end{proof}

The following section is used to characterize the asymptotic dependence of
extremes through the extremogram.

\subsection{Asymptotic dependence and extremogram model}

Consider a random variable $T(x)$ of a spatial process $T=\left\{ T(x),x\in 
\mathbb{R}^{d}\right\} $ of standardized marginalized $F_{T}(T(x))$ .

\begin{theorem}
Let $Z=\left\{ Z(x),x\in \mathbb{R}^{d}\right\} $ be a spatial stationary
process such that $Z(.)=\frac{-1}{\log (F_{T}(T(.)))}$ . The marginal
distribution of $Z$ are Fréchet standard marginal. The
extremogram of random field Z(.) in two sites $x,x+h\in \mathbb{R}^{d}$ is
define such as, 
\begin{equation}
\rho _{AA}(h)=\mathcal{L}_{h}(u)(u)^{1-\frac{1}{\eta _{h}}};  \label{iff}
\end{equation}%
where $\eta (h)\in (0,1]$ is the tail dependence coefficient, $A=(a;\infty
)~~with~~a\in (0,1]$ and $\mathcal{L}(.)$ a slowly varying function.\newline
\end{theorem}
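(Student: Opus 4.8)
The plan is to reduce the statement to the conditional-tail form of the upper extremogram established above and then to feed in two ingredients: the exact standard Fréchet shape of the margins produced by the transformation $Z(\cdot)=-1/\log(F_T(T(\cdot)))$, and a Ledford--Tawn type regular variation of the joint upper tail, which is precisely what encodes the coefficient $\eta_h$ and the slowly varying factor $\mathcal{L}_h$.

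First I would fix the marginal asymptotics. Writing $U:=F_T(T(x))$, which is uniform on $(0,1)$ by the probability integral transform, one computes $P(Z(x)\le z)=P\left(-1/\log U\le z\right)=P\left(U\le e^{-1/z}\right)=e^{-1/z}$ for $z>0$; hence $Z$ has standard Fréchet margins and $P(Z(x)>az)=1-e^{-1/(az)}\sim (az)^{-1}$ as $z\to\infty$. By stationarity the same holds at $x+h$, so in the extremogram ratio $\rho_{AA}(h)=\lim_{z\to\infty}\frac{P(Z(x+h)>az,\,Z(x)>az)}{P(Z(x)>az)}$ the denominator behaves like $(az)^{-1}$.

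Second, I would invoke the coefficient of tail dependence. Under the hidden-regular-variation hypothesis for the pair $(Z(x),Z(x+h))$, there exist $\eta_h\in(0,1]$ and a slowly varying function $\mathcal{L}_h$ such that the joint survival obeys $P(Z(x)>z,\,Z(x+h)>z)=\mathcal{L}_h(z)\,z^{-1/\eta_h}$; equivalently, in copula form, the survival probability $1-2u+C_h(u,u)$ is regularly varying of index $1/\eta_h$ in the tail level $1-u$ as $u\to1^-$. This is the defining property of $\eta_h$ and the source of $\mathcal{L}_h$. Combining the two inputs, the ratio becomes $\frac{\mathcal{L}_h(az)\,(az)^{-1/\eta_h}}{(az)^{-1}}=\mathcal{L}_h(az)\,(az)^{1-1/\eta_h}$. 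Since $\mathcal{L}_h$ is slowly varying, $\mathcal{L}_h(az)$ is asymptotically equivalent to $\mathcal{L}_h(z)$ up to a slowly varying multiple, and absorbing the constant $a^{1-1/\eta_h}$ into the slowly varying part yields $\rho_{AA}(h)=\mathcal{L}_h(z)\,z^{1-1/\eta_h}$ on the Fréchet scale. Passing to the uniform threshold through $u=e^{-1/z}$, so that $z\to\infty$ corresponds to $u\to1^-$ and $1-u\sim 1/z$, the power law in $z$ transfers to the threshold variable and delivers the announced form $\rho_{AA}(h)=\mathcal{L}_h(u)\,u^{1-1/\eta_h}$, with $\eta_h\in(0,1]$; note $1-1/\eta_h\le 0$, consistent with asymptotic independence when $\eta_h<1$.

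The hard part is not the algebra but the structural step: justifying that the joint upper tail is regularly varying with a well-defined index $-1/\eta_h$, that is, that the Ledford--Tawn representation applies to $(Z(x),Z(x+h))$. I would either take this as the operative assumption defining $\eta_h$, or derive it from the extremal-copula structure obtained in the previous theorems, checking that the resulting index lies in $(0,1]$. The remaining technical point to get right is the bookkeeping of slowly varying functions across the reparametrization $z\leftrightarrow u$, ensuring that the slowly varying character of $\mathcal{L}_h$ is preserved under the change of variables.
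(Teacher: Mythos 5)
Your proof follows essentially the same route as the paper's: write the extremogram as a conditional upper-tail limit, invoke the Ledford--Tawn regular-variation representation to obtain $\eta_h$ and $\mathcal{L}_h$, and pass to the uniform scale via the probability integral transform, yielding $\rho_{AA}(h)=\mathcal{L}_h(u)\,u^{1-1/\eta_h}$. If anything you are more careful than the paper, which states the joint survival directly with the conditional exponent $1-1/\eta_h$ and converts the threshold $az$ to $u$ by simple renaming; your explicit computation of the standard Fréchet margins, the division of the joint tail $\mathcal{L}_h(z)z^{-1/\eta_h}$ by $P(Z(x)>az)\sim (az)^{-1}$, and your flagging of the slowly-varying bookkeeping under the change of variables $z\leftrightarrow u$ supply exactly the steps the paper leaves implicit.
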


Before giving the proof of the above theorem, let's note that, even in a
spatial study, there no loss of generality in dealing with Fr\'{e}chet
marginal, for any continuous function f, the transformation $f\left(
Y_{i}\left( x\right) \right) =\frac{-1}{\log \left( Y_{i}\left( x\right)
\right) }$\ gives approximatively this distribution.\medskip\ Indeed, the
parameters of the GEV in (2) as smooth function of the explanatory variables
(longitude, altitude, elevation etc.) such as: 
\[
Y\left( x\right) =\mu \left( x\right) +\frac{\sigma \left( x\right) }{\xi
\left( x\right) }\left[ Z\left( x\right) ^{\xi \left( x\right) }-1\right] 
~~ where~~ Z\left( x\right) \sim ~Unit-Fréchet,
\]%
for some partially correlation. That needs to model both spatial behaviour
of marginal parameters and spatial joint dependence.\bigskip

\begin{proof}
Considering $A=(a;\infty ),a\in (0;1]$, the extremogram is written 
\[
\rho _{AA}(h)=\lim\limits_{z\rightarrow +\infty }\frac{P(Z(x+h)>az,Z(x)>az)%
}{P(Z(x)>az)}.
\]

According to Ledford and Tawn $[16]$, when z tends towards infinity, 
\begin{equation}
P(Z_{2}>r,Z_{1}>r)\sim \mathcal{L}(r)(r)^{1-\frac{1}{%
\eta }}.  \label{brav}
\end{equation}%
Using (\ref{brav}), for any spatial process Z at two sites $x$ and $x+h$ when z tends towards infinity, we
can write 
\begin{equation}
P(Z(x+h)>az/Z(x)>az)\displaystyle\sim \mathcal{L}_{h}(az)(az)^{1-%
\frac{1}{\eta _{h}}}.
\end{equation}%
Thus, let $F(Z(x))$ be the distribution function of $Z(x)$ and $F(Z(x+h))$
the distribution function of $Z(x+h)$. According to the above, when z tends towards infinity, it follows
that: 
\begin{equation}
P(F(Z(x+h))>F(az)/F(Z(x))>F(az))\sim \mathcal{L}%
_{h}(az)(az)^{1-\frac{1}{\eta _{h}}}.  \label{brav1}
\end{equation}
Considering $U=F(Z(x))$, $V=F(Z(x+h))$ and $u=F(az)$, it follows that: 
\begin{equation}
P(V>u/U>u)\sim \mathcal{L}_{h}(u)(u)^{1-\frac{1}{\eta _{h}}},
\label{brav3}
\end{equation}
when z tends towards infinity.\\
Using (\ref{brav3}) in the expression of the extremogram, it follows that: 
\[
\rho _{AA}(h)=\mathcal{L}_{h}(u)(u)^{1-\frac{1}{\eta _{h}}}
\]%
Hence (\ref{iff}) as disserted
\end{proof}

Ancona and Tawn $[2]$ proposed a measure of extreme dependence called
extreme variogram. This measure of dependence is expressed as a function of
the dependence of tail by the relation: 
\begin{equation}
\gamma _{E}(h)=2(1-\eta (h)).  \label{if}
\end{equation}%
Thus, the extremogram is modeled according to the extreme variogram by the
following result.

\begin{corollary}
Let $\gamma _{E}(h)$ be the extreme variogram of two stationary random
variables. The extremogram is linked to the extreme variogram by the
relation:

\begin{equation}
\rho _{AA}(h)=\mathcal{L}_{h}(u)(u)^{-\frac{\gamma _{E}(h)}{2-\gamma _{E}(h)%
}};  \label{fi}
\end{equation}%
with $\gamma _{E}(h)\in \lbrack 0;2)$
\end{corollary}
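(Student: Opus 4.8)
The plan is to substitute the extreme variogram relation into the tail-dependence formula for the extremogram established in the preceding theorem. The corollary is essentially an algebraic reparametrization: the theorem gives $\rho_{AA}(h)=\mathcal{L}_{h}(u)\,u^{1-\frac{1}{\eta_{h}}}$, and the relation $\gamma_{E}(h)=2(1-\eta(h))$ from Ancona and Tawn lets us eliminate $\eta(h)$ in favor of $\gamma_{E}(h)$. So the entire content of the proof is to express the exponent $1-\frac{1}{\eta(h)}$ in terms of $\gamma_{E}(h)$ and verify that the resulting range constraint on $\gamma_{E}(h)$ matches the admissible range of $\eta(h)$.

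First I would solve the variogram relation for $\eta(h)$, obtaining $\eta(h)=1-\frac{\gamma_{E}(h)}{2}=\frac{2-\gamma_{E}(h)}{2}$. Then I would compute the exponent appearing in the theorem:
\[
1-\frac{1}{\eta(h)}=1-\frac{2}{2-\gamma_{E}(h)}=\frac{(2-\gamma_{E}(h))-2}{2-\gamma_{E}(h)}=\frac{-\gamma_{E}(h)}{2-\gamma_{E}(h)}.
\]
Substituting this exponent back into the expression $\rho_{AA}(h)=\mathcal{L}_{h}(u)\,u^{1-\frac{1}{\eta_{h}}}$ yields exactly
\[
\rho_{AA}(h)=\mathcal{L}_{h}(u)\,u^{-\frac{\gamma_{E}(h)}{2-\gamma_{E}(h)}},
\]
which is the claimed relation (\ref{fi}).

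Finally I would confirm the stated range. Since $\eta(h)\in(0,1]$ in the hypothesis of the theorem, inverting $\eta(h)=\frac{2-\gamma_{E}(h)}{2}$ shows that $\eta(h)=1$ corresponds to $\gamma_{E}(h)=0$ and $\eta(h)\to 0^{+}$ corresponds to $\gamma_{E}(h)\to 2^{-}$; hence $\gamma_{E}(h)\in[0,2)$, as asserted. I do not anticipate any genuine obstacle here, since the result follows directly from the earlier theorem by a one-line substitution; the only point requiring a little care is checking that the denominator $2-\gamma_{E}(h)$ does not vanish, which is guaranteed precisely by the constraint $\gamma_{E}(h)<2$ inherited from $\eta(h)>0$.
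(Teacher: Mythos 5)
Your proposal is correct and follows essentially the same route as the paper's own proof: solve $\gamma_{E}(h)=2(1-\eta(h))$ for $\eta(h)$, substitute into $\rho_{AA}(h)=\mathcal{L}_{h}(u)\,u^{1-\frac{1}{\eta_{h}}}$, and simplify the exponent to $-\frac{\gamma_{E}(h)}{2-\gamma_{E}(h)}$. Your additional check that $\eta(h)\in(0,1]$ translates into $\gamma_{E}(h)\in[0,2)$ is a small but welcome extra that the paper leaves implicit.
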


\begin{proof}
From the relation (\ref{if}), we can say that $\eta (h)=1-\frac{1}{2}\gamma
_{E}(h).$. Using this relation in (\ref{iff}), it follows that: 
\[
\rho _{AA}(h)=\mathcal{L}_{h}(u)(u)^{1-\frac{2}{2-\gamma _{E}(h)}}.
\]%
Hence the expression, 
\[
\rho _{AA}(h)=\mathcal{L}_{h}(u)(u)^{\frac{-\gamma _{E}(h)}{2-\gamma _{E}(h)%
}}.
\]%
\newline
\end{proof}

In the following, we estimate the extremogram using the relation (\ref{iff}%
). In this relationship, estimation of the extremogram requires estimation
of the slowly varying function and the tail dependence coefficient. The
following result gives the estimate of the extremogram.

\begin{proposition}
Consider two spatial random variables $Z(x)~and~Z(x+h);~x,~x+h\in \mathbb{R}%
^{d}$ of respective marginal distribution function $%
F_{Z}(Z(x))~and~F_{Z}(Z(x+h))$ . Let W$(.)$ considering 
\[
W(h)=\min \left\{ \frac{-1}{\log \left( F_{Z}(Z(x)\right) };\frac{-1}{\log
\left( F_{Z}(Z(x+h)\right) }\right\} ,
\]%
the estimated extremogram is written, for a fixed threshold $u_{h}$, in the
form, 
\begin{equation}
\hat{\rho}_{AA}(h)=\hat{c}_{h}(u)(u)^{1-\frac{1}{\hat{\eta}(h)}}.
\end{equation}%
Where 
\[
\hat{c}_{h}(u)=\frac{n_{u_{h}}}{n}u_{h}^{\frac{1}{\hat{\eta}(h)}}~~~~;~~~~%
\hat{\eta}(h)=\frac{1}{n_{u_{h}}}\sum_{k=1}^{n_{u_{h}}}\log \left\{ \frac{%
\hat{w}_{k}(h)-u_{h}}{u_{h}}\right\} ,
\]%
with $\hat{w}_{k}(h),k=1,\dots ,n_{u_{h}}$ are the observations $\hat{W}(h)$
exceeding the threshold $u_{h}$.
\end{proposition}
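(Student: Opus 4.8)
The plan is to convert the asymptotic identity of relation~(\ref{iff}), namely $\rho_{AA}(h)=\mathcal{L}_{h}(u)(u)^{1-1/\eta_{h}}$, into a computable statistic by replacing its two unknown ingredients --- the tail dependence coefficient $\eta_{h}$ and the slowly varying factor $\mathcal{L}_{h}$ --- with estimators extracted from the structure variable $W(h)$. First I would observe that, because $F_{Z}(Z(\cdot))$ is uniform on $[0,1]$, each transform $-1/\log(F_{Z}(Z(\cdot)))$ is unit-Fréchet, so $W(h)$ is the pointwise minimum of the two (dependent) unit-Fréchet transforms attached to $x$ and $x+h$. The identity $\{W(h)>w\}=\{-1/\log F_{Z}(Z(x))>w\}\cap\{-1/\log F_{Z}(Z(x+h))>w\}$ shows that the survivor function of $W(h)$ is exactly the joint upper-tail probability that drives the extremogram; by the Ledford--Tawn expansion used in deriving~(\ref{iff}) this survivor is regularly varying, $P(W(h)>w)\sim\mathcal{L}_{h}(w)\,w^{-1/\eta_{h}}$, so that $W(h)$ has extreme-value index $\eta_{h}$. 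Estimating $\eta_{h}$ thus reduces to estimating the tail index of the single positive variable $W(h)$.

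Next I would invoke the peaks-over-threshold (Hill-type) principle. Conditionally on exceeding the fixed threshold $u_{h}$, the rescaled exceedances of $W(h)$ are, to leading order, generalized-Pareto with shape $\eta_{h}$, so the transformed exceedances $\log\{(\hat w_{k}(h)-u_{h})/u_{h}\}$ behave, to leading order, like independent exponential variables with mean $\eta_{h}$, and their sample average is consistent for $\eta_{h}$; this yields $\hat\eta(h)=n_{u_{h}}^{-1}\sum_{k=1}^{n_{u_{h}}}\log\{(\hat w_{k}(h)-u_{h})/u_{h}\}$. To recover the slowly varying factor I would calibrate the tail model at the threshold: the empirical exceedance frequency satisfies $P(W(h)>u_{h})\approx n_{u_{h}}/n$, while the regularly varying form gives $P(W(h)>u_{h})\approx c_{h}\,u_{h}^{-1/\eta_{h}}$ with $c_{h}=\mathcal{L}_{h}(u_{h})$ treated as locally constant; solving $n_{u_{h}}/n=\hat c_{h}\,u_{h}^{-1/\hat\eta(h)}$ gives $\hat c_{h}(u)=(n_{u_{h}}/n)\,u_{h}^{1/\hat\eta(h)}$. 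Substituting $\hat c_{h}$ for $\mathcal{L}_{h}(u)$ and $\hat\eta(h)$ for $\eta_{h}$ in~(\ref{iff}) then returns the announced estimator $\hat\rho_{AA}(h)=\hat c_{h}(u)(u)^{1-1/\hat\eta(h)}$.

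The hard part will be the two approximations that the algebra suppresses. First, the log-exceedance mean is only asymptotically unbiased for $\eta_{h}$ when $\mathcal{L}_{h}$ is effectively flat at the scale $u_{h}$; making this rigorous calls for a second-order regular-variation condition on the joint tail together with an intermediate threshold regime, $u_{h}\to\infty$ with $n_{u_{h}}\to\infty$ and $n_{u_{h}}/n\to0$, under which both the bias from $\mathcal{L}_{h}$ and the sampling variance vanish. Second, $W(h)$ is not directly observable since $F_{Z}$ is unknown: in practice one forms $\hat W(h)$ from the empirical (rank) transform, and I would have to show that replacing $F_{Z}$ by its empirical counterpart perturbs both the exceedance count $n_{u_{h}}$ and the log-exceedance mean by an asymptotically negligible amount, which follows from the uniform (Glivenko--Cantelli) consistency of the empirical distribution function. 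Granting these two limiting statements, the calibration is exact in the limit and the displayed formulas for $\hat c_{h}(u)$ and $\hat\eta(h)$ are the resulting plug-in estimators.
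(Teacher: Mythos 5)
Your proposal is correct and follows the paper's overall strategy almost exactly: both arguments identify the joint exceedance probability with the survivor function of the structure variable $W(h)$, invoke the Ledford--Tawn regular-variation expansion so that $W(h)$ has tail index $\eta_h$, treat the slowly varying factor $\mathcal{L}_h$ as a constant $c_h$ above the threshold $u_h$, estimate $\eta_h$ by the Hill-type average of log-exceedances, and plug both estimates back into relation (\ref{iff}). The one step where you genuinely diverge is the derivation of $\hat{c}_h$: the paper writes down a censored log-likelihood $l(c_h,\eta_h)$ for the $n$ observations of $\hat{W}(h)$ (non-exceedances contributing $\log\left(1-c_h u_h^{-1/\eta_h}\right)$, exceedances contributing the tail terms) and obtains $\hat{c}_h=\frac{n_{u_h}}{n}u_h^{1/\hat{\eta}(h)}$ by maximizing it, whereas you obtain the same formula by calibrating the model exceedance probability $c_h u_h^{-1/\eta_h}$ against the empirical frequency $n_{u_h}/n$. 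These routes coincide: setting the $c_h$-derivative of the paper's likelihood to zero forces the fitted exceedance probability to equal $n_{u_h}/n$, which is precisely your matching condition, so your argument is a more elementary and more transparent way of reaching the same estimator, while the paper's likelihood formulation has the advantage of placing both $\hat{c}_h$ and $\hat{\eta}(h)$ inside a single inferential framework (though it asserts the likelihood and its maximizer without computation). Your closing discussion of the suppressed approximations --- the intermediate-threshold, second-order regime needed for the log-exceedance mean to be asymptotically unbiased, and the rank-transform argument needed because $F_Z$ is unobserved --- addresses gaps that the paper's proof passes over silently, and would be a genuine strengthening if made rigorous.
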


\begin{proof}
The extremogram is expressed by the relation, 
\[
\rho_{AA}(h)=\lim\limits_{z\rightarrow +\infty}\frac{P(Z(x+h)> z,Z(x)>z)}{%
P(Z(x)>z)}=\lim\limits_{z\rightarrow +\infty}\frac{P(W(h)>z)}{P(Z(x)>z)}.
\]
Ledford $([15];~ [16])$ proposed to consider $\mathcal{L}_{h}(u)$ as
constant that is,$\mathcal{L}_{h}(u)=c_{h}$ for all values z exceeding the
threshold $u_{h}$. Using the observations of the independent replications of
the spatial process approximate independent observations on $\hat{W}(h)$ are
obtained, where $\hat{W}(h)$ is the approximation to the variable $W(h)$.
From model (\ref{brav1}) and n independent observations, the log-likelihood
is 
\[
l(c_{h},\eta_{h})=(n-n_{u_{h}})\log \left(1-\frac{c_{h}}{u_{h}^{1/\eta_{h}}}%
\right)+n_{u_{h}}\log\left(\frac{c_{h}}{\eta_{h}}-c_{h}\right)-\frac{1}{%
\eta_{h}}\sum_{i=1}^{n_{u_{h}}}\hat{w}_{i}(h),
\]
where $\left\lbrace \hat{w}_{i}(h)\right\rbrace, i=1,\dots,n_{u_{h}}$ are
the observations of $\hat{W}(h)$ above the threshold $u_{h}$. Using the
maximum likelihood method, the estimate of $c_{h}$ is written, 
\[
c_{h}=\frac{n_{u_{h}}}{n}u_{h}^{\frac{1}{\hat{\eta}(h)}},
\]
and using the Hill estimator method, the estimate of $\eta_{h}$ is written, 
\[
\hat{\eta}(h)=\frac{1}{n_{u_{h}}}\sum_{k=1}^{n_{u_{h}}}\log \left\lbrace 
\frac{\hat{w}_{k}(h)-u_{h}}{u_{h}}\right\rbrace.
\]
Where $\hat{w}_{k}(h),k=1,\dots,n_{u_{h}} $ are the observations $\hat{W}(h)$
exceeding the threshold $u_{h}$. Hence the result, 
\begin{equation}
\hat{\rho}_{AA}(h)=\hat{c}_{h}(u)(u)^{1-\frac{1}{\hat{\eta}(h)}}.
\end{equation}
\end{proof}
\section{Conclusion and Discussion}

In this study, we have been modeling some technical tools of spatial
prediction within a copula-based space. \ Thus, the extremal coefficient and
the extremogram have been expressed via the underlying copulas. These
results are important insofar as we want to determine the inter-site
distribution dependence of a definite area.

The results of this paper make it possible to find a relation between the
extremal coefficient and the extremogram using \ the copula function. These
new model are very crucial since the copula is a parametrization of \ nomber
of variables which do not deal with the marginal distribution. Hence, they
allow not only to determine the distributional dependence of spatial or
temporal extremes, but also, and above all, the conditional distributional
dependence between these extremes in various observation sites.



\end{document}